\begin{document}
\title{Wireless Information and Power Transfer: A Dynamic Power Splitting Approach}

\author{Liang Liu, Rui Zhang,~\IEEEmembership{Member,~IEEE}, and Kee-Chaing Chua,~\IEEEmembership{Member,~IEEE}

\thanks{L. Liu is with the
Department of Electrical and Computer Engineering, National
University of Singapore (e-mail:liu\_liang@nus.edu.sg).}
\thanks{R. Zhang is with the Department of Electrical and Computer Engineering, National
University of Singapore (e-mail: elezhang@nus.edu.sg). He is also
with the Institute for Infocomm Research, A*STAR, Singapore.}
\thanks{K. C. Chua is with the
Department of Electrical and Computer Engineering, National
University of Singapore (e-mail:eleckc@nus.edu.sg).}}

\maketitle

\begin{abstract}
Energy harvesting is a promising solution to prolong the operation time
of energy-constrained wireless networks. In particular, scavenging
energy from ambient radio signals, namely \emph{wireless energy
harvesting} (WEH), has recently drawn significant attention. In this
paper, we consider a point-to-point wireless link over the flat-fading channel, where the receiver has no
fixed power supplies and thus needs to replenish energy via WEH from the signals sent by the transmitter. We first
consider a SISO (single-input single-output) system where the single-antenna receiver cannot decode information and
harvest energy independently from the same signal received. Under this practical constraint, we propose a {\it dynamic power splitting} (DPS) scheme, where the
received signal is split into two streams with adjustable power levels for information decoding and energy harvesting separately
based on the instantaneous channel condition that is assumed to be known at the receiver. We derive the optimal power splitting rule at the receiver to achieve various trade-offs
between the maximum ergodic capacity for information transfer and the
maximum average harvested energy for power transfer, which are characterized by
the boundary of a so-called ``rate-energy'' region. Moreover, for
the case when the channel state information is also known at the
transmitter, we investigate the joint optimization of transmitter power
control and
receiver power splitting. The achievable rate-energy (R-E) region by the proposed DPS scheme is compared against that by the existing time switching scheme as well as a performance upper bound by ignoring the practical receiver constraint. Finally, we extend the result for DPS to the SIMO (single-input multiple-output) system where the receiver is equipped with multiple antennas. In particular, we investigate a low-complexity power splitting scheme, namely \emph{antenna switching}, which can be practically implemented to achieve the near-optimal rate-energy trade-offs as compared to the optimal DPS.
\end{abstract}

\begin{keywords}

Energy harvesting, wireless power transfer, power control, fading
channel, ergodic capacity, multiple-antenna system, power splitting, time switching, antenna switching.

\end{keywords}

\setlength{\baselineskip}{1.0\baselineskip}
\newtheorem{definition}{\underline{Definition}}[section]
\newtheorem{fact}{Fact}
\newtheorem{assumption}{Assumption}
\newtheorem{theorem}{\underline{Theorem}}[section]
\newtheorem{lemma}{\underline{Lemma}}[section]
\newtheorem{corollary}{\underline{Corollary}}[section]
\newtheorem{proposition}{\underline{Proposition}}[section]
\newtheorem{example}{\underline{Example}}[section]
\newtheorem{remark}{\underline{Remark}}[section]
\newtheorem{algorithm}{\underline{Algorithm}}[section]
\newcommand{\mv}[1]{\mbox{\boldmath{$ #1 $}}}

\section{Introduction}

\PARstart{R}ecently,
energy harvesting has become a prominent solution to prolong the
lifetime of energy-constrained wireless networks, such as sensor networks. Compared with conventional energy supplies such as batteries that have fixed operation time, energy harvesting from the environment potentially provides an unlimited energy supply for wireless networks. Besides other commonly used
energy sources such as solar and wind, radio frequency (RF) signal holds a promising future for wireless energy harvesting (WEH) since it can also be used to provide wireless information transmission at the same time, which has motivated an upsurge of research interest on RF-based wireless
information and power transfer recently \cite{Varshney08}-\cite{Rui12}. Prior works \cite{Varshney08}, \cite{Sahai10} have studied the
fundamental performance limits of wireless systems with simultaneous
information and power transfer, where the
receiver is ideally assumed to be able to decode the information and harvest
the energy independently from the same received signal. However,
this assumption implies that the received signal used for
harvesting energy can be reused for decoding information without any loss, which is not realizable yet due to practical circuit limitations.
Consequently, in \cite{Rui11} the authors proposed two practical
receiver designs, namely ``time switching'', where the receiver switches
between decoding information and harvesting energy at any time, and
``power splitting'', where the receiver splits the signal into two streams of different power for
decoding information and harvesting energy separately, to enable
WEH with simultaneous information transmission.

In this paper, we further investigate the power splitting scheme in
\cite{Rui11} for a point-to-point single-antenna flat-fading channel, where the receiver is able to dynamically adjust the split power ratio for information decoding and energy harvesting based on the channel state information (CSI) that is assumed to be known at the receiver, a scheme so-called ``dynamic power splitting (DPS)'' as shown in Fig. \ref{fig1}.
We assume that the transmitter has a constant power supply, whereas the receiver has no fixed power supplies and thus
needs to harvest energy from the received
signal sent by the transmitter. For the ease of hardware implementation, we consider the case where the information decoding circuit and energy harvesting circuit are separately designed (as opposed to an integrated design in \cite{Rui2012}). As a
result, the receiver needs to determine the amount of received signal power that is split to the information receiver versus that to the energy receiver
based on the instantaneous channel power. We
derive the optimal power splitting rule at the receiver to achieve
various tradeoffs between the maximum ergodic capacity for information transmission versus the maximum
average harvested energy for power transmission, which are characterized by the
boundary of a so-called
``rate-energy (R-E)'' region. Moreover, for the case of CSI also known at the transmitter (CSIT), we
examine the joint optimization of transmitter power control and receiver power splitting, and show the achievable R-E gains over the case without CSIT.

Furthermore, we extend the DPS scheme for the single-input single-output (SISO) system to the single-input multiple-output (SIMO) system, where the receiver is equipped with multiple antennas. After deriving the optimal DPS rule for the SIMO system which in general requires independent power splitters that are connected to different receiving antennas, we further investigate a low-complexity power splitting scheme so-called ``antenna switching'' proposed in \cite{Rui11}, whereby the total number of receiving antennas is divided into two subsets, one for decoding information and the other for  harvesting energy. It is noted that for the SISO fading channel case, antenna switching reduces to time switching, which has been studied in our previous work \cite{Rui12}. In \cite{Rui12}, the optimal time switching rule based on the receiver CSI and its corresponding transmitter power control policy (in the case of CSI known at the transmitter) were derived to achieve various trade-offs between wireless information and energy transfer. It was shown that for time switching, the optimal policy is threshold based, i.e., the receiver decodes information when the fading channel gain is below a certain threshold, and harvests energy otherwise. It is worth noting that although theoretically time switching can be regarded as a special form of power splitting with only on-off power allocation at each receiving antenna, they are implemented by different hardware circuits (time switcher versus power splitter) in practice.


The main results of this paper are summarized as follows:
\begin{itemize}
\item For the SISO case, we show that to achieve the optimal R-E trade-offs in both the cases without or with CSIT by DPS, a fixed amount of the received signal power should be allocated to the information receiver, with the remaining power allocated to the energy receiver when the fading channel gain is above a given threshold. However, when the fading channel gain is below this threshold, all the received power should be allocated to the information receiver. Compared with our previous result for the time switching receiver in \cite{Rui12} where only the energy harvesting receiver can benefit from ``good'' fading channels above the threshold, the DPS scheme utilizes the ``good'' fading states for both information decoding and energy harvesting. As a result, we show by simulations that DPS can achieve substantial R-E performance gains over dynamic time switching in the SISO fading channel. Moreover, we derive the R-E region for the ideal case when the receiver can decode information and harvest energy from the same received signal independently without any rate or energy loss, which provides a theoretical performance upper bound for the DPS scheme.
\item For the SIMO case where the receiver is equipped with multiple antennas, we extend the result for DPS as follows. First, we show that a uniform power splitting (UPS) scheme where all the receiving antennas are assigned with the same power splitting ratio is optimal. We derive  the optimal UPS rule and/or transmitter power control (in the case with CSIT) based on the result for the SISO system by treating all the receiving antennas as one virtual antenna with an equivalent channel sum-power. Second, to ease the hardware implementation of UPS, we investigate the optimal antenna switching rule to maximize the achievable R-E trade-offs. An exhaustive search algorithm is presented first, and then a new low-complexity antenna selection algorithm is proposed, which is shown to perform closer to the optimal UPS as the number of receiving antennas increases. Moreover, it is shown that with the optimal antenna selection, even with two receiving antennas, the R-E performance of antenna switching is already very close to that with the optimal UPS. This demonstrates the usefulness of antenna switching as a practically appealing low-complexity implementation for power splitting.

\end{itemize}

The rest of this paper is organized as follows. Section
\ref{System Model} presents the system
model and illustrates the encoding and decoding schemes for wireless
information transfer with opportunistic energy harvesting by DPS. Section
\ref{Ergodic Capacity and Harvested Energy Trade-off in Fading Channels} defines the R-E region achievable by DPS and formulates the
problems to characterize its boundaries without or with CSIT. Sections \ref{Rare-Energy Trade-off} presents the optimal DPS rule at the receiver and/or power control policy at the transmitter (in the case of CSIT) to achieve various R-E trade-offs in the SISO fading channel. Section \ref{sec:Implementation of Power Splitting by Multiple Receive Antennas} extends the result to the SIMO fading channel and investigates the practical scheme of antenna switching. Finally,
Section \ref{Concluding Remarks} concludes the paper.

\section{System Model}\label{System Model}

\begin{figure}
\begin{center}
\scalebox{0.45}{\includegraphics*{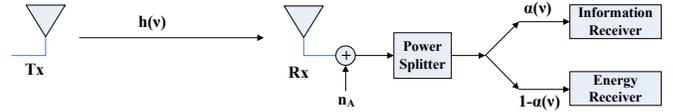}}
\end{center}
\caption{SISO system model.}\label{fig1}
\end{figure}

As shown in Fig. \ref{fig1}, we first consider a wireless SISO
link consisting of one pair of single-antenna
transmitter (Tx) and single-antenna receiver (Rx) over the flat-fading channel. The case of single-antenna Tx and multi-antenna Rx or SIMO system will be addressed later in Section \ref{sec:Implementation of Power Splitting by Multiple Receive Antennas}. For convenience, we assume that the channel from
Tx to Rx follows a block-fading model \cite{Shamai94}, \cite{Caire}. The equivalent complex baseband channel from Tx to Rx in
one particular fading state is denoted by $g(\nu)$, where $\nu$ denotes the fading state, and the channel power gain at fading state $\nu$ is denoted by $h(\nu)=|g(\nu)|^2$. It is assumed that the random variable (RV) $h(\nu)$ has a continuous probability density function (PDF) denoted by $f_\nu(h)$. At any fading state $\nu$,
$h(\nu)$ is assumed to be perfectly known at Rx, but may or may not be known at Tx.


We consider time-slotted transmissions at Tx and the DPS
scheme at Rx. As shown in Fig. \ref{fig1}, at Rx, the RF-band signal is corrupted by an additive noise $n_A$ introduced by the receiver antenna, which is assumed to be a circularly
symmetric complex Gaussian (CSCG) RV with zero mean and variance
$\sigma_A^2$, denoted by $n_A\sim \mathcal{CN}(0,\sigma_A^2)$, in its baseband equivalent. The RF-band signal is then fed into a power splitter \cite{Wu10}, \cite{datasheet}, where the signal plus the antenna noise is split to the information receiver and energy receiver \cite{Popovic08} separately. For each fading state $\nu$, the portion of signal power split to information decoding (ID) is
denoted by $\alpha(\nu)$ with $0\leq \alpha(\nu) \leq 1$, and that to energy harvesting (EH) as $1-\alpha(\nu)$, where in general $\alpha(\nu)$ can be adjusted over different fading states. The ID circuit introduces an additional baseband noise $n_{\rm ID}$ to the signal split to the information receiver, which is assumed to be a CSCG RV with zero mean and variance $\sigma^2$, and independent of the antenna noise $n_A$. As a result, the equivalent noise power for ID is $\alpha(\nu)\sigma_A^2+\sigma^2$ at fading state $\nu$. On the other hand, in addition to the split signal energy, the energy receiver can harvest $(1-\alpha(\nu))\sigma_A^2$ amount of energy (normalized by the slot duration) due to the antenna noise $n_A$. However, in practice, $n_A$ has a negligible influence on both the ID and EH since $\sigma_A^2$ is usually much smaller than the noise power introduced by the information receiver, $\sigma^2$, and thus even lower than the average power of the received signal. Thus, in the rest of this paper, we assume $\sigma_A^2=0$ for simplicity.

\begin{figure}
\begin{center}
\scalebox{0.45}{\includegraphics*{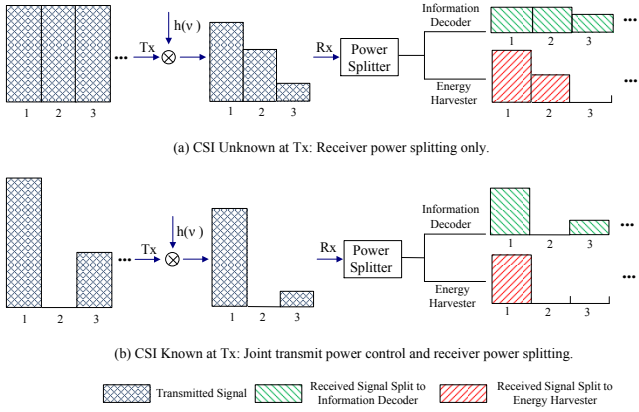}}
\end{center}
\caption{Encoding and decoding strategies for wireless information transfer
with opportunistic WEH (via dynamic power splitting). The height of
block shown in the figure denotes the signal power.}\label{fig2}
\end{figure}

For the DPS scheme, we describe the enabling encoding and
decoding strategies for the following two
cases. Case I: $h(\nu)$ is unknown at Tx for all the
fading states of $\nu$, referred to as \emph{CSI Unknown at Tx}; and
Case II: $h(\nu)$ is perfectly known at Tx for each
fading state $\nu$, referred to as \emph{CSI Known at Tx} (CSIT).

First, consider the case of CSI Unknown at Tx, which is depicted in Fig. \ref{fig2}(a). In this case, Tx sends information continuously
with constant power $P$ for all the fading states due to the lack of
CSIT \cite{Shamai}. At each fading state $\nu$, Rx determines the optimal power ratio allocated to the information decoder $\alpha(\nu)$ and the energy harvester $1-\alpha(\nu)$, based on $h(\nu)$. For example, as shown in Fig. \ref{fig2}(a), in time slot 3, all the received power is allocated to the information decoder (i.e., $\alpha(\nu)=1$), while in time slots 1 and 2, the received power is split to both the information decoder and energy harvester (i.e., $0<\alpha(\nu)<1$).

Next, consider the case of CSIT as shown in Fig. \ref{fig2}(b). In this case, Tx is able to schedule
transmissions for information and energy transfer to Rx based on $h(\nu)$. As will be shown later in Section \ref{the case with CSIT}, the optimal power splitting rule in this case always has $\alpha(\nu)\neq 0$ provided that the transmitted power is non-zero. As a result, without loss of generality, we can assume that at any fading state $\nu$, Tx either transmits information signal or does not transmit at all (to save power). For example, in Fig. \ref{fig2}(b), Tx transmits information signal in time slots 1 and 3, and transmits no signal in time slot 2. Accordingly, Rx splits the received signal to the information decoder and the energy receiver (i.e., $0<\alpha(\nu)<1$) in slot 1, but allocates all the received power to the information receiver in time slot 3 (i.e., $\alpha(\nu)=1$). Moreover, Tx can implement power control based on
the instantaneous CSI to further improve the information and energy transmission
efficiency. Let $p(\nu)$ denote the transmit power of Tx at fading
state $\nu$. In this paper, we consider two types of power
constraints on $p(\nu)$, namely {\it average power constraint} (APC)
and {\it peak power constraint} (PPC). The APC
limits the average transmit power of Tx over all the fading states,
i.e., $E_{\nu}[p(\nu)]\leq P_{{\rm avg}}$, where $E_{\nu}[\cdot]$
denotes the expectation over $\nu$. In contrast, the PPC constrains
the instantaneous transmit power of Tx at each of the fading states,
i.e., $p(\nu)\leq P_{{\rm peak}}$, $\forall \nu$. Without loss of
generality, we assume $P_{{\rm avg}}\leq P_{{\rm peak}}$. For
convenience, we define the set of feasible power allocation as
\begin{align}\label{eqn:feasible power
set}\mathcal{P}\triangleq\big\{p(\nu): E_{\nu}[p(\nu)]\leq P_{{\rm
avg}}, p(\nu)\leq P_{{\rm peak}}, \forall \nu\big\}.\end{align}It is worth noting that for the case without CSIT, a fixed transmit power is assumed with $p(\nu)=P_{{\rm avg}}\triangleq P$, $\forall \nu$, such that both the APC and PPC are satisfied.

\section{Rate and Energy Trade-off in the SISO Fading Channel}\label{Ergodic Capacity and Harvested Energy Trade-off in Fading Channels}

In this paper, we consider the ergodic capacity as a relevant performance metric for information transfer. For the DPS scheme, given $\alpha(\nu)$ and $p(\nu)$, the instantaneous mutual information (IMI)
for the Tx-Rx link at fading state $\nu$ is expressed as\begin{align}\label{eqn:achievable rate}r(\nu)=\log\left(1+\frac{\alpha(\nu) h(\nu)p(\nu)}{\sigma^2}\right).\end{align}As a result, the ergodic capacity can be expressed as \cite{Shamai}\begin{align}R=E_\nu[r(\nu)].\end{align}
For information transfer, if
CSIT is not available, the ergodic capacity can be achieved by a
single Gaussian codebook with constant transmit power over all
different fading states \cite{Shamai99}; however, with CSIT, the
ergodic capacity can be further maximized by the ``water-filling (WF)''
based power allocation subject to the peak power constraint $P_{{\rm peak}}$ \cite{Goldsmith}, \cite{Khojastepour2004}.

On the other hand, for wireless energy transfer, the harvested energy (normalized by the slot duration) at each fading state $\nu$ can be expressed as $Q(\nu)=\xi(1-\alpha(\nu))h(\nu)p(\nu)$, where $\xi$ is a constant that
accounts for the loss in the energy transducer for converting the harvested energy to electrical energy to be stored; for convenience, it is assumed that $\xi=1$ in the rest of this paper unless stated otherwise. We thus have\begin{align}\label{eqn:harvested energy}Q(\nu)=(1-\alpha(\nu))h(\nu)p(\nu).\end{align}The average energy that is harvested at Rx is then given by\begin{align}Q_{{\rm avg}}=E_\nu[Q(\nu)].\end{align}

Evidently, there exist trade-offs in
assigning the power splitting ratio $\alpha(\nu)$ and/or transmit power
$p(\nu)$ (in the case of CSIT) to balance between maximizing the ergodic capacity for
information transfer versus maximizing the average harvested energy
for power transfer. To characterize such trade-offs, we adopt the so-called {\it Rate-Energy} (R-E) region (defined below) as introduced in \cite{Rui11}, \cite{Rui12}, which consists
of all the achievable ergodic capacity and average harvested energy
pairs given a power constraint $\mathcal{P}$ in (\ref{eqn:feasible power
set}). Specifically, in the case without (w/o) CSIT, the R-E region is defined
as\begin{align}\label{eqn:rate-energy region without
CSI}\mathcal{C}_{{\rm R-E}}^{{\rm w/o \ CSIT}}\triangleq &
\bigcup\limits_{p(\nu)=P,0\leq\alpha(\nu)\leq 1, \forall \nu} \bigg\{ (R,Q_{{\rm
avg}}):  \nonumber \\ & R \leq E_\nu[r(\nu)], Q_{{\rm avg}}\leq
E_\nu\left[Q(\nu)\right]\bigg\},\end{align}while in the case with CSIT, the R-E region is defined
as\begin{align}\label{eqn:rate-energy region with
CSI}\mathcal{C}_{{\rm R-E}}^{{\rm with \ CSIT}}\triangleq &
\bigcup\limits_{p(\nu)\in\mathcal{P}, 0\leq \alpha(\nu)\leq 1, \forall
\nu} \bigg\{(R,Q_{{\rm avg}}): \nonumber \\ & R \leq E_\nu[r(\nu)], Q_{{\rm
avg}}\leq E_\nu\left[Q(\nu)\right]\bigg\}.\end{align}

\begin{figure}
\begin{center}
\scalebox{0.56}{\includegraphics*[88,220][534,560]{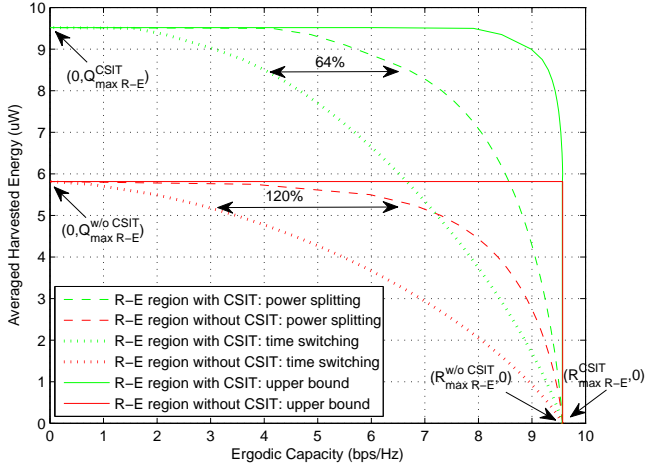}}
\end{center}
\caption{Examples of R-E region with versus without CSIT.}\label{fig3}
\end{figure}

Fig. \ref{fig3} shows some examples of the R-E region without versus with CSIT by the DPS scheme (see Section \ref{Rare-Energy Trade-off} for the details of computing these regions). It is assumed that the average transmit power constraint is $P_{{\rm avg}}=0.1$ watt(W) or $20$dBm, and the peak power constraint is $P_{{\rm peak}}=0.2$W or $23$dBm. The average operating distance between Tx and Rx is assumed to be $d=5$ meters, which results in an average of $40$dB signal power attenuation at a carrier frequency assumed as $f_c=900$MHz. With this distance, the line-of-sight (LOS) signal plays the dominant role, and thus Rician fading is used to model the channel. Specifically, at each fading state $\nu$, the complex channel can be modeled as $g(\nu)=\sqrt{\frac{K}{K+1}}\hat{g}+\sqrt{\frac{1}{K+1}}\tilde{g}(\nu)$, where $\hat{g}$ is the LOS deterministic component with $|\hat{g}|^2=-40$dB (to be consistent with the average path loss), $\tilde{g}(\nu)\sim \mathcal{CN}(0,-40{\rm dB})$ denotes the Rayleigh fading component, and $K$ is the Rician factor specifying the power ratio between the LOS and fading components in $g(\nu)$. Here we set $K=3$. The bandwidth of the transmitted signal is assumed to be $10$MHz, and the information receiver noise is assumed to be white Gaussian with power spectral density $-120$dBm/Hz or $-50$dBm over the entire bandwidth of $10$MHz. Moreover, the energy conversion efficiency for the energy harvester is assumed to be $\xi=0.5$. For comparison, we also show the R-E regions by a special form of DPS known as time switching \cite{Rui12} under the same channel setup with or without CSIT. Furthermore, the R-E regions obtained by assuming that the receiver can ideally decode information and harvest energy from the same received signal without any rate/power loss \cite{Sahai10} are added as a performance upper bound for DPS and time switching. It is observed that CSIT helps improve the achievable
R-E pairs at the receiver for both DPS and time switching schemes. Moreover, as compared to time switching, DPS achieves substantially improved R-E trade-offs towards the performance upper bound. For example, when $90\%$ of the maximum harvested energy is achieved, the ergodic capacity is increased by $64\%$ for the case with CSIT and $120\%$ for the case without CSIT, by comparing DPS versus time switching. It is also observed that when the average harvested power is smaller than $5.1$uW, DPS for the case without CSIT even outperforms time switching for the case with CSIT.


In Fig. \ref{fig3}, there are two boundary points shown in each R-E region, which are denoted by $(0,Q_{{\rm max}}^{{\rm w/o \ CSIT}})$, $(R_{{\rm max}}^{{\rm w/o \ CSIT}},0)$ for the case without CSIT, and $(0,Q_{{\rm max}}^{{\rm CSIT}})$, $(R_{{\rm max}}^{{\rm CSIT}},0)$ for the case with CSIT. For example, for the R-E trade-offs in the case without CSIT, we have
\begin{align}
& Q_{{\rm max}}^{{\rm w/o \ CSIT}}=E_\nu[h(\nu)P], \\ & R_{{\rm max}}^{{\rm w/o \ CSIT}}=E_\nu\left[\log\left(1+\frac{h(\nu)P}{\sigma^2}\right)\right].
\end{align}Note that $Q_{{\rm max}}^{{\rm w/o \ CSIT}}$ is achieved when $\alpha(\nu)=0$, $\forall \nu$, and thus the resulting ergodic capacity is zero, while $R_{{\rm max}}^{{\rm w/o \ CSIT}}$ is achieved when $\alpha(\nu)=1$, $\forall \nu$, and thus the resulting harvested energy is zero. The above holds for both time switching and power splitting receivers. Similarly, $Q_{{\rm max}}^{{\rm CSIT}}$ and $R_{{\rm max}}^{{\rm CSIT}}$ in the case with CSIT can be obtained, while for brevity, their
expressions are omitted here. It is worth noting that in general $R_{{\rm max}}^{{\rm CSIT}}>R_{{\rm max}}^{{\rm w/o \ CSIT}}$ due to the WF-based power control. However, with high signal-to-noise ratio (SNR), the rate gain by transmitter power control is negligibly small. As a result, in Fig. \ref{fig3} $R_{{\rm max}}^{{\rm CSIT}}$ and $R_{{\rm max}}^{{\rm w/o \ CSIT}}$ are observed to be very close to each other.

Since the optimal trade-offs between the ergodic capacity and the average harvested energy are
characterized by the boundary of the R-E region,
it is important to characterize all the boundary $(R,Q_{{\rm avg}})$ pairs for DPS in both the cases without and with
CSIT. Similarly as for the case of time switching in \cite{Rui12}, to characterize the Parato boundary of the R-E region for DPS, we need to solve the following two optimization problems.\begin{align*}\mathrm{(P1)}:~\mathop{\mathtt{Maximize}}_{\{\alpha(\nu)\}}
& ~~~ E_\nu[r(\nu)] \\
\mathtt {Subject \ to} & ~~~ E_\nu[Q(\nu)]\geq \bar{Q} \\ & ~~~
0\leq \alpha(\nu) \leq 1, \ \forall \nu
\end{align*}
\begin{align*}\mathrm{(P2)}:~\mathop{\mathtt{Maximize}}_{\{p(\nu),\alpha(\nu)\}}
& ~~~ E_\nu[r(\nu)] \\
\mathtt {Subject \ to} & ~~~ E_\nu[Q(\nu)] \geq \bar{Q} \\ & ~~~
p(\nu) \in \mathcal{P}, \ \forall \nu \\ & ~~~ 0\leq \alpha(\nu) \leq 1,
\ \forall \nu
\end{align*}where $\bar{Q}$ is a target average harvested energy required to maintain the receiver's operation. By solving Problem (P1) for all $0\leq \bar{Q}\leq Q_{{\rm max}}^{{\rm w/o \ CSIT}}$ and Problem (P2) for all $0\leq \bar{Q} \leq Q_{{\rm max}}^{{\rm CSIT}}$, we can characterize the entire boundary of the R-E region for the
case without CSIT (defined in (\ref{eqn:rate-energy region without
CSI})) and with CSIT (defined in (\ref{eqn:rate-energy region with
CSI})), respectively.


Problem (P1) is a convex optimization problem in terms of $\alpha(\nu)$'s, whereas Problem (P2) is non-convex in general since both the objective $E_\nu[r(\nu)]$ and harvested energy constraint $E_\nu[Q(\nu)]$ are non-concave functions over $\alpha(\nu)$ and $p(\nu)$. However, it can be verified that the Lagrangian duality method can still be
applied to solve Problem (P2) globally optimally, i.e., (P2) has strong duality or zero duality gap \cite{Boyd04}.

\begin{lemma}\label{lemma1}
Let $\{p^a(\nu),\alpha^a(\nu)\}$ and $\{p^b(\nu),\alpha^b(\nu)\}$ denote the
optimal solutions to Problem (P2) given the average harvested energy constraint and average transmit power constraint pairs $(\bar{Q}^a,P^a_{{\rm avg}})$
and $(\bar{Q}^b,P^b_{{\rm avg}})$, respectively. Then for any $0\leq \theta \leq 1$, there always exists a feasible solution $\{p^c(\nu),\alpha^c(\nu)\}$ such that \begin{align*}&E_\nu[r^c(\nu)]\geq \theta E_\nu[r^a(\nu)]+(1-\theta) E_\nu[r^b(\nu)], \\ &E_\nu[Q^c(\nu)]\geq \theta \bar{Q}^a+(1-\theta) \bar{Q}^b, \\ &E_\nu[p^c(\nu)]\leq \theta P_{{\rm avg}}^a+(1-\theta)P_{{\rm avg}}^b,\end{align*}where $r^\chi(\nu)=\log(1+\frac{h(\nu)\alpha^\chi(\nu)p^\chi(\nu)}{\sigma^2})$ with $\chi\in \{a,b,c\}$, and $Q^c(\nu)=(1-\alpha^c(\nu))h(\nu)p^c(\nu)$.
\end{lemma}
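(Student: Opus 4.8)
The plan is to prove this by a \emph{time-sharing} (convexification) argument that exploits the assumed continuity of the fading PDF $f_\nu(h)$. The point of the lemma is that, although (P2) is non-convex in $(p(\nu),\alpha(\nu))$ at each fading state, the \emph{ensemble} of states can be split so that the two given optimal policies are each applied on a disjoint portion of the state space; the resulting single deterministic policy then inherits, as averages, the convex combinations of the two objective/constraint values. This is exactly the ``time-sharing condition'' that subsequently yields zero duality gap for (P2), so it is the natural statement to target.

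Concretely, I would first invoke the continuity (non-atomicity) of the distribution of $h(\nu)$ to partition the set of fading states into two disjoint measurable groups $\mathcal{A}$ and $\mathcal{B}$ with $E_\nu[\mathbf{1}_{\mathcal{A}}]=\theta$ and $E_\nu[\mathbf{1}_{\mathcal{B}}]=1-\theta$, chosen so that the conditional distribution of $h(\nu)$ within each group coincides with the original one; that is, $E_\nu[\phi(h(\nu))\mathbf{1}_{\mathcal{A}}]=\theta\,E_\nu[\phi(h(\nu))]$ for every integrable $\phi$, and likewise for $\mathcal{B}$ with factor $1-\theta$. Since the optimal policies depend on the fading state only through the channel power $h(\nu)$, such a \emph{balanced} split is all that is needed, and it is precisely what the continuous PDF assumption buys us.

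Given such a partition, I would define the combined policy by pasting: $(p^c(\nu),\alpha^c(\nu))=(p^a(\nu),\alpha^a(\nu))$ for $\nu\in\mathcal{A}$ and $(p^c(\nu),\alpha^c(\nu))=(p^b(\nu),\alpha^b(\nu))$ for $\nu\in\mathcal{B}$. The pointwise feasibility constraints come for free, since $0\leq\alpha^c(\nu)\leq 1$ and the peak constraint $p^c(\nu)\leq P_{{\rm peak}}$ hold on each piece by feasibility of $\{p^a,\alpha^a\}$ and $\{p^b,\alpha^b\}$. Because $r^c(\nu)$, $Q^c(\nu)$, and $p^c(\nu)$ are functions of $h(\nu)$ on each piece, the balanced property gives $E_\nu[r^c(\nu)]=\theta E_\nu[r^a(\nu)]+(1-\theta)E_\nu[r^b(\nu)]$ (the rate bound in fact with equality), and similarly for the energy and power averages. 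I would then invoke feasibility of the two given optima, $E_\nu[Q^a(\nu)]\geq\bar{Q}^a$, $E_\nu[Q^b(\nu)]\geq\bar{Q}^b$, $E_\nu[p^a(\nu)]\leq P_{{\rm avg}}^a$, and $E_\nu[p^b(\nu)]\leq P_{{\rm avg}}^b$, to turn these equalities into the three claimed inequalities.

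The main obstacle is the very first step, the construction of the balanced partition $\{\mathcal{A},\mathcal{B}\}$. A split based on the \emph{value} of $h(\nu)$ alone (e.g.\ a threshold on the channel gain) would concentrate each group on a sub-range and thus fail to preserve the conditional distribution, so one genuinely needs continuous randomness that is orthogonal to $h(\nu)$ to realize the $\theta$-versus-$(1-\theta)$ split at \emph{every} channel level, for instance the phase of $g(\nu)$, or equivalently a time-sharing/interleaving over fading blocks whose labeling is drawn independently of the channel realizations. Continuity (non-atomicity) of the distribution is exactly the hypothesis that makes this split possible; once it is in hand, the remainder is routine linearity-of-expectation bookkeeping.
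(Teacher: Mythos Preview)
Your proposal is correct and is the same time-sharing argument the paper gives. The paper's version is less formal: rather than invoking an explicit balanced partition built from randomness orthogonal to $h(\nu)$, it subdivides each infinitesimal interval $[\hat h,\hat h+\Delta h]$ of channel gains into a $\theta$-fraction carrying policy $a$ and a $(1-\theta)$-fraction carrying policy $b$---which escapes the sub-range concentration issue you flagged precisely because continuity of $f_\nu$ makes the PDF and both optimal policies effectively constant on each such interval, so the local split realizes the same $\theta{:}(1-\theta)$ weighting at every channel level.
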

\begin{proof}
Please refer to Appendix \ref{appendix1}.
\end{proof}

Let $\Phi_2(\bar{Q},P_{{\rm avg}})$ denote the optimal value of (P2)
given the average harvested energy constraint $\bar{Q}$ and the average power constraint $P_{{\rm avg}}$. Lemma \ref{lemma1} implies that the ``time-sharing'' condition in \cite{Yu06} holds for (P2), and thus $\Phi_2(\bar{Q},P_{{\rm avg}})$ is concave in $(\bar{Q},P_{{\rm avg}})$, which then yields the zero duality gap of Problem (P2) according to the convex analysis in \cite{Boyd04}. Therefore, in the next section, we will apply the Lagrange duality method to
solve both (P1) and (P2).

\section{Optimal Policy for the SISO Fading Channel}\label{Rare-Energy Trade-off}

In this section, we study the optimal power splitting policy at Rx and/or
power control policy at Tx to achieve various optimal rate and energy trade-offs in the SISO fading channel
for both the cases without and with CSIT by solving Problems (P1) and
(P2), respectively.

\subsection{The Case Without CSIT}\label{the case without CSIT}
First, we consider Problem (P1) for the unknown CSIT case to determine the optimal power splitting rule at Rx with constant transmit power $P$ at Tx. The Lagrangian of Problem (P1) is expressed as\begin{align}L(\alpha(\nu),\lambda)=E_\nu[r(\nu)]+\lambda(E_\nu[Q(\nu)]-\bar{Q}),\end{align}where $\lambda\geq 0$ is the dual
variable associated with the harvested energy constraint $\bar{Q}$.
Then, the Lagrange dual function of Problem (P1) is given by
\begin{align}\label{eqn:dual function1}
g(\lambda)=\max\limits_{0\leq \alpha(\nu) \leq 1,\forall
\nu}L(\alpha(\nu),\lambda).
\end{align}

The maximization problem (\ref{eqn:dual
function1}) can be decoupled into parallel subproblems all having the
same structure and each for one fading state. For a particular
fading state $\nu$, the associated subproblem is expressed as
\begin{align}\label{eqn:subproblem3}\max_{0\leq \alpha \leq 1}  ~~~ L_{\nu}^{{\rm
w/o \ CSIT}}(\alpha),\end{align}where \begin{align}\label{eqn:1}L_{\nu}^{{\rm
w/o \ CSIT}}(\alpha)=r+\lambda Q=\log\left(1+\frac{\alpha hP}{\sigma^2}\right)+\lambda(1-\alpha)hP.\end{align}Note that in the above we have dropped the index $\nu$
for the fading state for brevity.

With a given $\lambda$, Problem (\ref{eqn:dual
function1}) can be efficiently solved by solving Problem (\ref{eqn:subproblem3}) for different fading states of $\nu$. Problem (P1) is then solved by iteratively solving (\ref{eqn:dual
function1}) with fixed $\lambda$, and updating $\lambda$ via a simple bisection method until the harvested energy constraint is met with equality \cite{Boyd04}. Let $\lambda^\ast$ denote the optimal dual solution that has a one-to-one correspondence to $\bar{Q}$ in Problem (P1). Then, we have the following proposition.

\begin{proposition}\label{proposition3}
The optimal solution to Problem (P1) is given by
\begin{align}\label{eqn:proposition3}\alpha^\ast(\nu)=\left\{\begin{array}{ll}\frac{1}{\lambda^\ast h(\nu)P}-\frac{\sigma^2}{h(\nu)P}, & {\rm if} \ h(\nu)\geq \frac{1}{\lambda^\ast P}-\frac{\sigma^2}{P}, \\ 1, & {\rm otherwise}.\end{array}\right.\end{align}

\end{proposition}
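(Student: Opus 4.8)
The plan is to exploit the convexity of (P1) together with the Lagrangian decomposition already set up in (\ref{eqn:dual function1})--(\ref{eqn:1}). Since (P1) is convex in $\{\alpha(\nu)\}$, strong duality holds, so it suffices to characterize, at the optimal dual variable $\lambda^\ast$, the per-fading-state maximizer of the subproblem (\ref{eqn:subproblem3}); stitching these maximizers together over $\nu$ then yields the globally optimal primal solution. The whole argument therefore reduces to solving the one-dimensional maximization of $L_{\nu}^{{\rm w/o \ CSIT}}(\alpha)$ over $[0,1]$.

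First I would establish that $L_{\nu}^{{\rm w/o \ CSIT}}(\alpha)$ is strictly concave on $[0,1]$. Differentiating (\ref{eqn:1}) gives
\[
\frac{\partial L_{\nu}^{{\rm w/o \ CSIT}}}{\partial \alpha}=\frac{hP}{\sigma^2+\alpha hP}-\lambda hP,
\]
and a further differentiation produces $-(hP)^2/(\sigma^2+\alpha hP)^2<0$, confirming strict concavity. Hence the maximizer is unique and is obtained by projecting the unconstrained stationary point onto $[0,1]$.

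Next I would solve the stationarity condition $\partial L_{\nu}^{{\rm w/o \ CSIT}}/\partial\alpha=0$, which rearranges to $\sigma^2+\alpha hP=1/\lambda$ and hence to the interior candidate $\tilde{\alpha}=\frac{1}{\lambda hP}-\frac{\sigma^2}{hP}$. To determine the projection I would test the two endpoints. The candidate obeys $\tilde{\alpha}\leq 1$ exactly when $hP\geq 1/\lambda-\sigma^2$, i.e.\ when $h\geq\frac{1}{\lambda P}-\frac{\sigma^2}{P}$; in this regime the stationary point is feasible and thus optimal. When $h$ lies below this threshold, $\tilde{\alpha}>1$, and since the stationary point sits to the right of the interval, $L_{\nu}^{{\rm w/o \ CSIT}}$ is increasing throughout $[0,1]$, so concavity forces the maximizer to the boundary $\alpha^\ast=1$. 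Setting $\lambda=\lambda^\ast$ reproduces precisely the two-branch expression (\ref{eqn:proposition3}).

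The step I expect to require the most care is the lower endpoint $\alpha\geq 0$, which (\ref{eqn:proposition3}) does not display explicitly. One checks that $\tilde{\alpha}\geq 0\iff\lambda\leq 1/\sigma^2$, equivalently that the derivative at $\alpha=0$ equals $hP(1/\sigma^2-\lambda)$, which is nonnegative exactly in this range. For every target $\bar{Q}<Q_{{\rm max}}^{{\rm w/o \ CSIT}}$ the optimal dual variable satisfies $\lambda^\ast<1/\sigma^2$ (a larger $\lambda$ would drive all split power to the energy receiver and collapse onto the degenerate extreme point $(0,Q_{{\rm max}}^{{\rm w/o \ CSIT}})$), so the lower constraint stays inactive and may be omitted. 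I would finish by noting that, under $\lambda^\ast<1/\sigma^2$, the threshold $\frac{1}{\lambda^\ast P}-\frac{\sigma^2}{P}$ is strictly positive, so both branches of (\ref{eqn:proposition3}) are genuinely realized over the support of $f_\nu(h)$, consistent with the one-to-one correspondence between $\lambda^\ast$ and $\bar{Q}$.
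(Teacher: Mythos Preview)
Your proposal is correct and follows essentially the same approach as the paper: compute the derivative of $L_{\nu}^{{\rm w/o \ CSIT}}(\alpha)$, identify the interior stationary point, and determine by the sign of the derivative at the endpoints whether the optimum is interior or at $\alpha=1$. Your treatment is in fact slightly more thorough than the paper's, since you explicitly verify strict concavity via the second derivative and address the lower endpoint $\alpha\geq 0$ by arguing $\lambda^\ast<1/\sigma^2$, a point the paper only remarks on after stating the proposition.
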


\begin{proof}
Please refer to Appendix \ref{appendix5}.
\end{proof}

It can be inferred from Proposition \ref{proposition3} that the power allocated to information decoding is a constant for all the fading states with $h(\nu)\geq \frac{1}{\lambda^\ast P}-\frac{\sigma^2}{P}$ since $\alpha^\ast(\nu) h(\nu)P=\frac{1}{\lambda^\ast}-\sigma^2\geq0$. Thus, $\lambda^\ast\leq \frac{1}{\sigma^2}$ must hold in (\ref{eqn:proposition3}). As a result, the achievable rate is a constant equal to $\log\frac{1}{\lambda^\ast \sigma^2}$ for such fading states. On the other hand, if $h(\nu)<\frac{1}{\lambda^\ast P}-\frac{\sigma^2}{P}$, all the received power is allocated to the information receiver. The above result is explained as follows. Suppose that if an amount of received power $\bar{P}$ is allocated to information receiver, we gain $\log(1+\frac{\bar{P}}{\sigma^2})$ in the achievable rate, but lose $\lambda^\ast \bar{P}$ in the harvested energy. Since the utility for our optimization problem given in (\ref{eqn:1}) at each fading state is the difference between the gain in the achievable rate and the loss in the harvested energy, the maximum utility is achieved when $\bar{P}^\ast=\frac{1}{\lambda^\ast}-\sigma^2$, which is a constant regardless of the fading state. Therefore, if the received power $h(\nu)P\geq \bar{P}^\ast$, i.e., $h(\nu)\geq \frac{1}{\lambda^\ast P}-\frac{\sigma^2}{P}$, then the received power allocated to the information receiver should be a constant $\frac{1}{\lambda^\ast}-\sigma^2$, and the remaining received power, i.e., $h(\nu)P-(\frac{1}{\lambda^\ast}-\sigma^2)$, is allocated to the energy receiver. Otherwise, if the received power is less than $\bar{P}^\ast$, it should be totally allocated to the information receiver.

In the following, we compare the above optimal receiver power splitting rule to the optimal time switching rule proposed in \cite{Rui12} for the achievable R-E trade-offs in the case without CSIT. For convenience, let $\lambda_{{\rm PS}}$ and $\lambda_{{\rm TS}}$ denote the optimal dual solutions to Problem (P1) with DPS and its modified problem (by changing the constraint $0\leq \alpha(\nu)\leq 1$ in (P1) to $\alpha(\nu)\in \{0,1\}$, $\forall \nu$) with time switching, respectively, for the same given $\bar{Q}$. Moreover, similarly as in \cite{Rui12}, we define the time switching indicator function as follows:\begin{align}\label{eqn:indicator
function}\alpha(\nu)=\left\{\begin{array}{ll}1, & {\rm ID \ mode \ is
\ active}
\\ 0, & {\rm EH \ mode \ is \ active}. \end{array} \right. \end{align}
We then have the following observations in order:
\begin{itemize}
\item When the fading state is ``poor'', i.e., $0<h(\nu)\leq \bar{h}$ for time switching or $0<h(\nu)\leq \frac{1}{\lambda_{{\rm PS}}P}-\frac{\sigma^2}{P}$ for power splitting, where $\bar{h}$ is the unique solution to the following equation given in \cite{Rui12}: $\log\left(1+\frac{hP}{\sigma^2}\right)=\lambda_{{\rm TS}}hP$, the optimal receiver strategy is to allocate all the received power to the information receiver for both cases of time switching and power splitting, i.e., $\alpha(\nu)=1$. In other words, the power allocated to information receiver is $h(\nu)P$, and that to energy receiver is $0$.
\item When the fading state is ``good'', i.e., $h(\nu)>\bar{h}$ for time switching or $h(\nu)>\frac{1}{\lambda_{{\rm PS}}P}-\frac{\sigma^2}{P}$ for power splitting, all the received power is allocated to energy harvester for time switching, i.e., $\alpha(\nu)=0$, while for power splitting, a constant power $\alpha(\nu)h(\nu)P=\frac{1}{\lambda_{{\rm PS}}}-\sigma^2$ is allocated to the information receiver, with the remaining power $(1-\alpha(\nu))h(\nu)P=h(\nu)P-\frac{1}{\lambda_{{\rm PS}}}+\sigma^2$ allocated to the energy receiver.
\end{itemize}
To summarize, the main difference between the optimal time switching and power splitting polices in the case without CSIT lies in the above ``good'' fading states. Specifically, both information decoding and energy harvesting can benefit from such good fading states if power splitting is used, while only energy harvesting benefits if time switching is used. An illustration of the above difference in the received power allocation for power splitting versus time switching is given in Fig. \ref{fig4}.

\begin{figure}
\begin{center}
\scalebox{0.5}{\includegraphics*{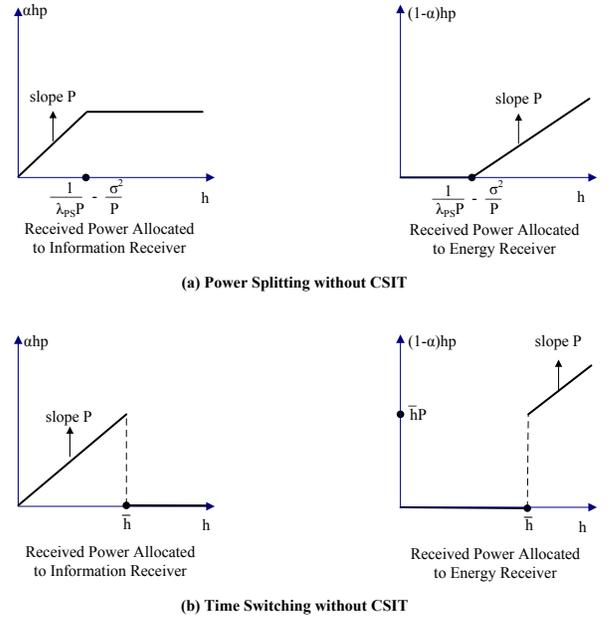}}
\end{center}
\caption{Power splitting versus time switching: a comparison of the received power allocation to information receiver and energy receiver over different fading states for the case without CSIT.}\label{fig4}
\end{figure}

\subsection{The Case With CSIT}\label{the case with CSIT}
For the case with CSIT, in addition to the receiver's DPS, the transmitter can implement power control to further improve the R-E trade-off. To jointly optimize the values of $p(\nu)$ and $\alpha(\nu)$, $\forall \nu$, we need to solve Problem (P2), shown as follows.

Let $\lambda$ and $\beta$ denote the nonnegative
dual variables corresponding to the average harvested energy
constraint and average transmit power constraint in Problem (P2), respectively.
Similarly as for Problem (P1), Problem (P2) can be decoupled into
parallel subproblems each for one particular fading state and
expressed as (by ignoring the fading index
$\nu$)\begin{align}\label{eqn:subproblem4}\max_{0\leq p\leq P_{{\rm
peak}},0\leq \alpha \leq 1}  ~~~ L_{\nu}^{{\rm
with \ CSIT}}(p,\alpha),\end{align}where \begin{align}\label{eqn:2}L_{\nu}^{{\rm
with \ CSIT}}(p,\alpha)=&r+\lambda Q-\beta p \nonumber \\ =& \log\left(1+\frac{\alpha hp}{\sigma^2}\right)+\lambda (1-\alpha)hp-\beta p.\end{align}

After solving Problem (\ref{eqn:subproblem4}) with given $\lambda$ and $\beta$ for all the fading states, we can update $(\lambda,\beta)$ via the ellipsoid method \cite{Boyd04}. It can be shown
that the sub-gradient for updating $(\lambda,\beta)$ is
$(E_\nu[Q^\ast(\nu)]-\bar{Q},P_{{\rm avg}}-E_\nu[p^\ast(\nu)])$,
where $Q^\ast(\nu)$ and $p^\ast(\nu)$ denote the harvested energy
and transmit power at fading state $\nu$, respectively, obtained by
solving Problem (\ref{eqn:subproblem4}) for a given pair of
$\lambda$ and $\beta$. Let $\lambda^\ast$ and $\beta^\ast$ denote the optimal dual solutions to Problem (P2) for a given set of $\bar{Q}$, $P_{{\rm avg}}$ and $P_{{\rm peak}}$. Similarly as for Proposition \ref{proposition3}, it can be shown that when $0\leq \bar{Q} \leq Q_{{\rm max}}^{{\rm CSIT}}$, it must hold that $\lambda^\ast<\frac{1}{\sigma^2}$. We then have the following proposition.

\begin{proposition}\label{proposition4}
By defining $\tilde{h}=\frac{1}{\lambda^\ast P_{{\rm peak}}}-\frac{\sigma^2}{P_{{\rm peak}}}$, the optimal solution to Problem (P2) is given by

If $\frac{\beta^\ast}{\lambda^\ast}\leq \tilde{h}$,
\small{\begin{align}\left\{\begin{array}{ll}p^\ast(\nu)=P_{{\rm peak}}, \ \alpha^\ast(\nu)=\frac{\tilde{h}}{h(\nu)}, & {\rm if} \ h(\nu)\geq \tilde{h}, \\ p^\ast(\nu)=P_{{\rm peak}}, \ \alpha^\ast(\nu)=1, & \frac{\beta^\ast\sigma^2}{1-\beta^\ast P_{{\rm peak}}}\leq h(\nu)< \tilde{h}, \\ p^\ast(\nu)=\frac{1}{\beta^\ast}-\frac{\sigma^2}{h(\nu)}, \ \alpha^\ast(\nu)=1, & \beta^\ast \sigma^2\leq h(\nu) <\frac{\beta^\ast\sigma^2}{1-\beta^\ast P_{{\rm peak}}}, \\ p^\ast(\nu)=0, & {\rm otherwise}\end{array}\right.\end{align}}

if $\frac{\beta^\ast}{\lambda^\ast}>\tilde{h}$,\begin{align}\left\{\begin{array}{ll}p^\ast(\nu)=P_{{\rm peak}}, \ \alpha^\ast(\nu)=\frac{\tilde{h}}{h(\nu)}, & {\rm if} \ h(\nu)\geq \frac{\beta^\ast}{\lambda^\ast}, \\ p^\ast(\nu)=\frac{1}{\beta^\ast}-\frac{\sigma^2}{h(\nu)}, \ \alpha^\ast(\nu)=1, & \beta^\ast \sigma^2\leq h(\nu) <\frac{\beta^\ast}{\lambda^\ast}, \\ p^\ast(\nu)=0, & {\rm otherwise}.\end{array}\right.\end{align}
\end{proposition}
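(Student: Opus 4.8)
The plan is to exploit the zero duality gap established just after Lemma~\ref{lemma1}: it suffices to solve the per-state dual subproblem (\ref{eqn:subproblem4}) exactly for each fading state, after which the optimal primal policy is recovered by tuning $\lambda^\ast,\beta^\ast$ to meet the average-energy and average-power constraints. Since the Lagrangian (\ref{eqn:2}) is jointly non-concave in $(p,\alpha)$, I would solve the two-dimensional subproblem by nested maximization: first maximize over $\alpha$ for fixed $p$, then over $p$.

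For the inner problem, observe that $L_\nu^{\rm with\ CSIT}(p,\alpha)$ is concave in $\alpha$ (a concave log term plus a term linear in $\alpha$). Setting $\partial L/\partial\alpha=0$ gives the unconstrained stationary point $\alpha h p=\frac{1}{\lambda}-\sigma^2$, which is exactly the constant-ID-power structure already found in Proposition~\ref{proposition3}. Imposing $0\le\alpha\le1$ and using $\lambda^\ast<\frac{1}{\sigma^2}$ (so the stationary value is nonnegative), the maximizer is $\alpha^\ast=\frac{1/\lambda-\sigma^2}{hp}$ whenever $hp\ge\frac{1}{\lambda}-\sigma^2$, and $\alpha^\ast=1$ otherwise. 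Substituting back defines $g(p)\triangleq\max_\alpha L_\nu^{\rm with\ CSIT}(p,\alpha)$, which is piecewise: on the $\alpha=1$ branch (small $p$) it equals $\log(1+hp/\sigma^2)-\beta p$, and on the clamped branch (large $p$) it becomes affine in $p$ with slope $\lambda h-\beta$.

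The key enabling observation is that $g$ is in fact \emph{concave} in $p$. The two pieces join at $p_0=(1/\lambda-\sigma^2)/h$ with matching value and matching derivative (both equal $\lambda h-\beta$ there, since $\sigma^2+hp_0=1/\lambda$), while the log-branch has a non-increasing derivative that descends to exactly $\lambda h-\beta$ at $p_0$, after which the slope stays constant. Hence the apparent non-convexity dissolves, and maximizing $g$ over $[0,P_{\rm peak}]$ reduces to locating the unconstrained maximizer of a concave function relative to the boundaries $0$, $p_0$, and $P_{\rm peak}$. On the log-branch the stationary point is the water-filling power $p=\frac{1}{\beta}-\frac{\sigma^2}{h}$; on the clamped branch $g$ is monotone with slope sign $\mathrm{sign}(\lambda h-\beta)$, i.e. governed by whether $h\gtrless\beta/\lambda$.

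The remaining and most laborious step is the bookkeeping of thresholds in $h$. Comparing $p_0\lessgtr P_{\rm peak}$ produces the boundary $h=\tilde h=\frac{1}{\lambda P_{\rm peak}}-\frac{\sigma^2}{P_{\rm peak}}$, deciding whether the clamped branch is reachable within the peak-power budget; comparing the water-filling power against $0$ and $P_{\rm peak}$ produces the thresholds $\beta\sigma^2$ and $\frac{\beta\sigma^2}{1-\beta P_{\rm peak}}$; and the slope sign produces $\beta/\lambda$. Organizing these according to whether $\frac{\beta}{\lambda}\le\tilde h$ or $\frac{\beta}{\lambda}>\tilde h$ then reproduces exactly the two branch lists in the statement: in each $h$-interval the concave maximizer sits at $p=P_{\rm peak}$ with clamped $\alpha=\tilde h/h$, at water-filling with $\alpha=1$, or at $p=0$. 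I expect this threshold case analysis---checking that the intervals tile the whole range consistently and that the water-filling power never violates $p\le P_{\rm peak}$ outside the clamped regime---to be the main obstacle, whereas the optimization itself is routine once concavity of $g$ is in hand.
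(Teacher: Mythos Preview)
Your plan is correct and will work. The inner maximization over $\alpha$ is identical to what the paper does in Appendix~\ref{appendix6}: it yields $\alpha^\ast=1$ when $hp<\frac{1}{\lambda}-\sigma^2$ and $\alpha^\ast=\frac{1/\lambda-\sigma^2}{hp}$ otherwise, producing the same piecewise profile function $g(p)$.

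Where you diverge from the paper is in the outer step. The paper treats the two pieces of $g$ as two separate subproblems (P2.1) over $S_1$ and (P2.2) over $S_2$, solves each, and then compares their optimal values by explicitly computing and signing three difference quantities $d_1,d_2,d_3$ across the various $h$-regimes. Your route is to notice that $g$ is in fact globally concave on $[0,P_{\rm peak}]$ because the two branches meet at $p_0=(1/\lambda-\sigma^2)/h$ with matching value \emph{and} matching derivative $\lambda h-\beta$, so the derivative of $g$ is continuous and non-increasing. This is correct and more economical: once concavity is established, the maximizer is simply the unique stationary point clipped to $[0,P_{\rm peak}]$, and the case analysis in $h$ follows by comparing the water-filling point $\frac{1}{\beta}-\frac{\sigma^2}{h}$ and the branch boundary $p_0$ against $0$ and $P_{\rm peak}$, together with the sign of the affine slope $\lambda h-\beta$. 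The paper's sign computations for $d_1,d_2,d_3$ are then implicit consequences of concavity rather than separate calculations. What the paper's approach buys is that each comparison is self-contained and does not rely on the global shape of $g$; what yours buys is a single structural observation that eliminates the need for those comparisons. Either way, the threshold bookkeeping you flag as the main obstacle is the same, and your identification of the four thresholds $\beta\sigma^2$, $\frac{\beta\sigma^2}{1-\beta P_{\rm peak}}$, $\tilde h$, and $\beta/\lambda$ is exactly what is needed.
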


\begin{proof}
Please refer to Appendix \ref{appendix6}.
\end{proof}

It is worth noting that similar to the case without CSIT, from Proposition \ref{proposition4} it follows that in the case with CSIT, if $h(\nu)>\max(\frac{\beta^\ast}{\lambda^\ast},\tilde{h})$, a constant received power $\alpha(\nu)^\ast h(\nu)p^\ast(\nu)=\frac{1}{\lambda^\ast}-\sigma^2$ is allocated to the information receiver, while the remaining received power is allocated to the energy receiver; otherwise, if $h(\nu)\leq \max(\frac{\beta^\ast}{\lambda^\ast},\tilde{h})$, all the received power is allocated to the information receiver.


Next, we compare the optimal power splitting and time switching for the achievable R-E trade-offs in the case with CSIT. For convenience, let $(\lambda_{{\rm PS}},\beta_{{\rm PS}})$ and $(\lambda_{{\rm TS}},\beta_{{\rm TS}})$ denote the optimal dual solutions to Problem (P2) with DPS and its modified form (by changing the constraint $0\leq \alpha(\nu)\leq 1$ in (P2) as $\alpha(\nu)\in \{0,1\}$, $\forall \nu$) with time switching, respectively. We then obtain the following observations:
\begin{itemize}
\item When the fading state is ``poor'', i.e., $0<h(\nu)\leq \beta_{{\rm TS}}\sigma^2$ for time switching or $0<h(\nu)\leq \beta_{{\rm PS}}\sigma^2$ for power splitting, the optimal strategy is to switch off the transmission to save transmit power in both schemes.
\item For moderate fading states with $\beta_{{\rm TS}}\sigma^2< h(\nu)\leq \hat{h}$ for time switching or $\beta_{{\rm PS}}\sigma^2<h(\nu)\leq \max(\frac{\beta_{\rm PS}}{\lambda_{\rm PS}},\tilde{h})$ for power splitting, where $\hat{h}$ is the largest root of the following equation given in \cite{Rui12}: $\log\frac{h}{\beta_{{\rm TS}}\sigma^2}-1+\frac{\beta_{{\rm TS}}\sigma^2}{h}-\lambda_{{\rm TS}} hP_{{\rm peak}}+\beta_{{\rm TS}} P_{{\rm peak}}=0$, the optimal strategy is to transmit information with water-filling power allocation at Tx (with the maximum transmit power capped by $P_{{\rm peak}}$) and allocate all the received power to information receiver in both schemes.

\item When the fading state is ``good'', i.e., $h(\nu)>\hat{h}$ for time switching or $h(\nu)> \max(\frac{\beta_{\rm PS}}{\lambda_{\rm PS}},\tilde{h})$ for power splitting, the optimal strategy of the transmitter is to transmit at peak power $P_{{\rm peak}}$ in both schemes. However, at the receiver, all the received power is allocated to the energy receiver for time switching, i.e., $(1-\alpha(\nu))h(\nu)p(\nu)=h(\nu)P_{{\rm peak}}$, while for power splitting, only a constant amount of the received power $\alpha(\nu)h(\nu)p(\nu)=\frac{1}{\lambda_{{\rm PS}}}-\sigma^2$ is allocated to information receiver with the remaining power $(1-\alpha(\nu))h(\nu)p(\nu)=h(\nu)P_{{\rm peak}}-\frac{1}{\lambda_{{\rm PS}}}+\sigma^2$ allocated to the energy receiver.
\end{itemize}
To summarize, similar to the case without CSIT, the main difference between the optimal resource allocation polices between power splitting and time switching for the case with CSIT lies in the above ``good'' fading states. Specifically, both information decoding and energy harvesting can benefit from good fading states if power splitting is used, while only energy harvesting benefits if time switching is used. An illustration of the above transmitter power control and receiver power allocation policies for power splitting versus time switching is given in Fig. \ref{fig5}.

\begin{figure}
\begin{center}
\subfigure[power splitting with CSIT: $\frac{\beta_{{\rm PS}}}{\lambda_{{\rm PS}}}\geq \frac{1}{\lambda_{{\rm PS}}P_{{\rm peak}}}-\frac{\sigma^2}{P_{{\rm peak}}}$]{\scalebox{0.45}{\includegraphics*{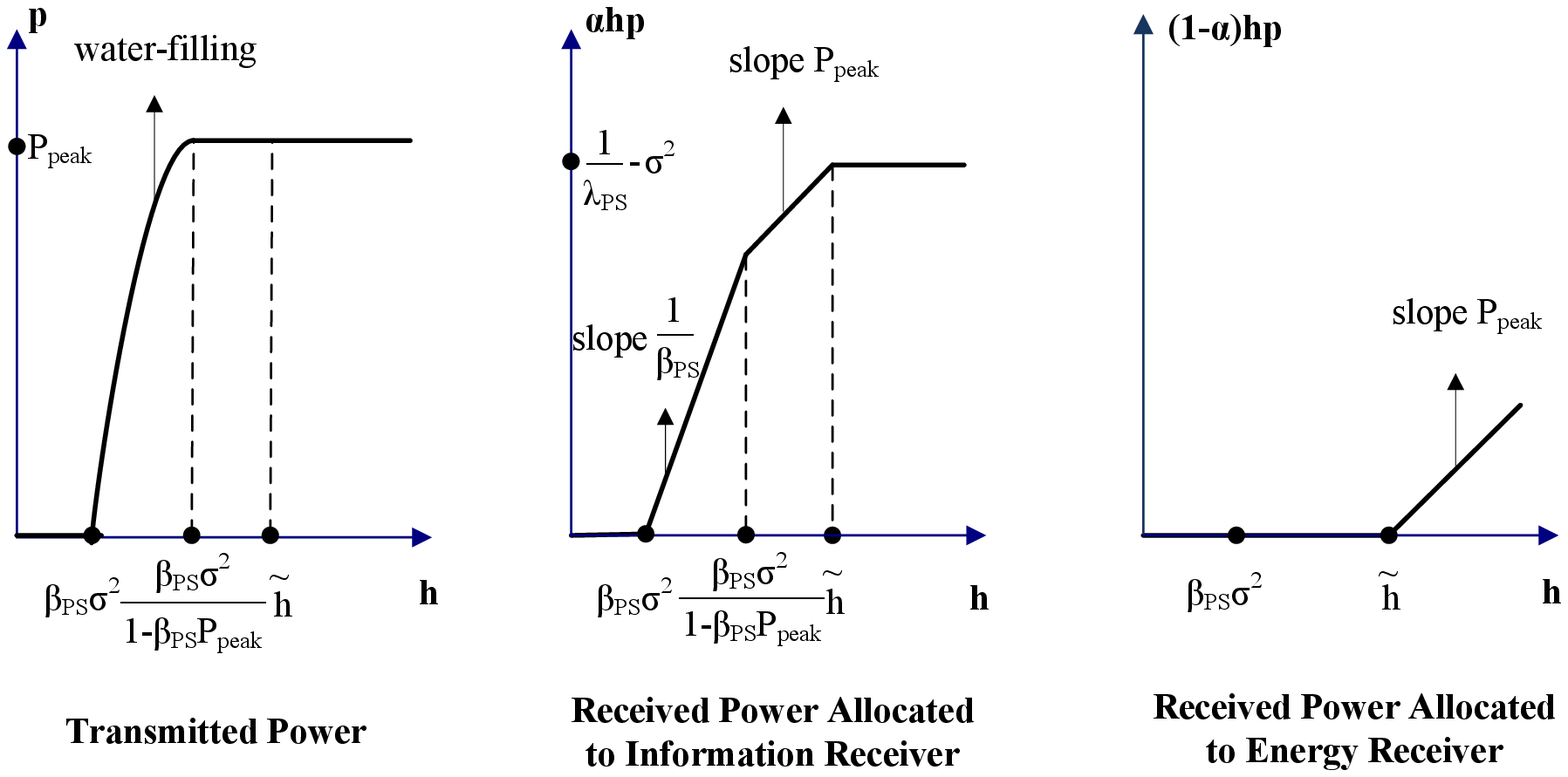}}} \\
\subfigure[power splitting with CSIT: $\frac{\beta_{{\rm PS}}}{\lambda_{{\rm PS}}}< \frac{1}{\lambda_{{\rm PS}}P_{{\rm peak}}}-\frac{\sigma^2}{P_{{\rm peak}}}$]{\scalebox{0.45}{\includegraphics*{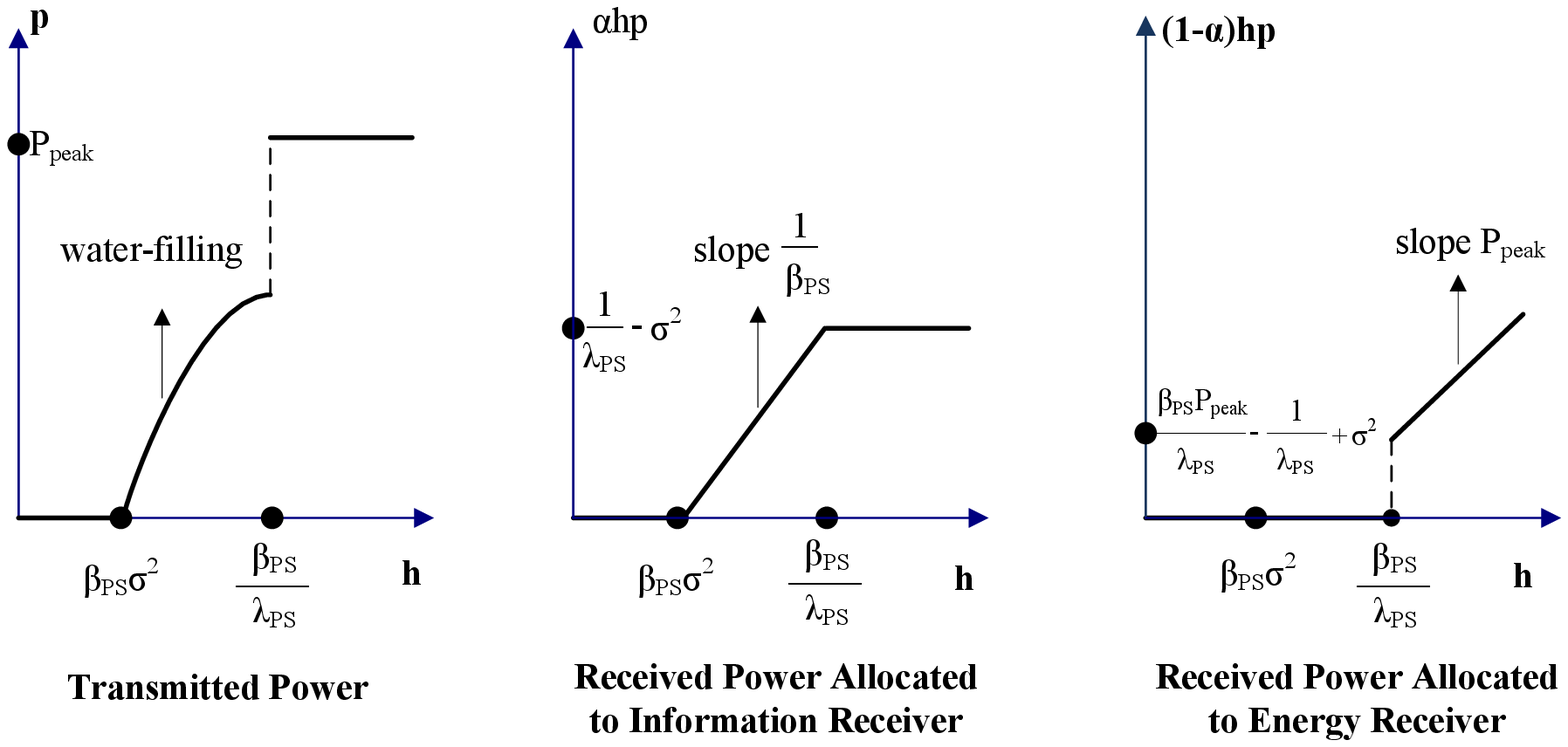}}} \\
\subfigure[time switching with CSIT]{\scalebox{0.45}{\includegraphics*{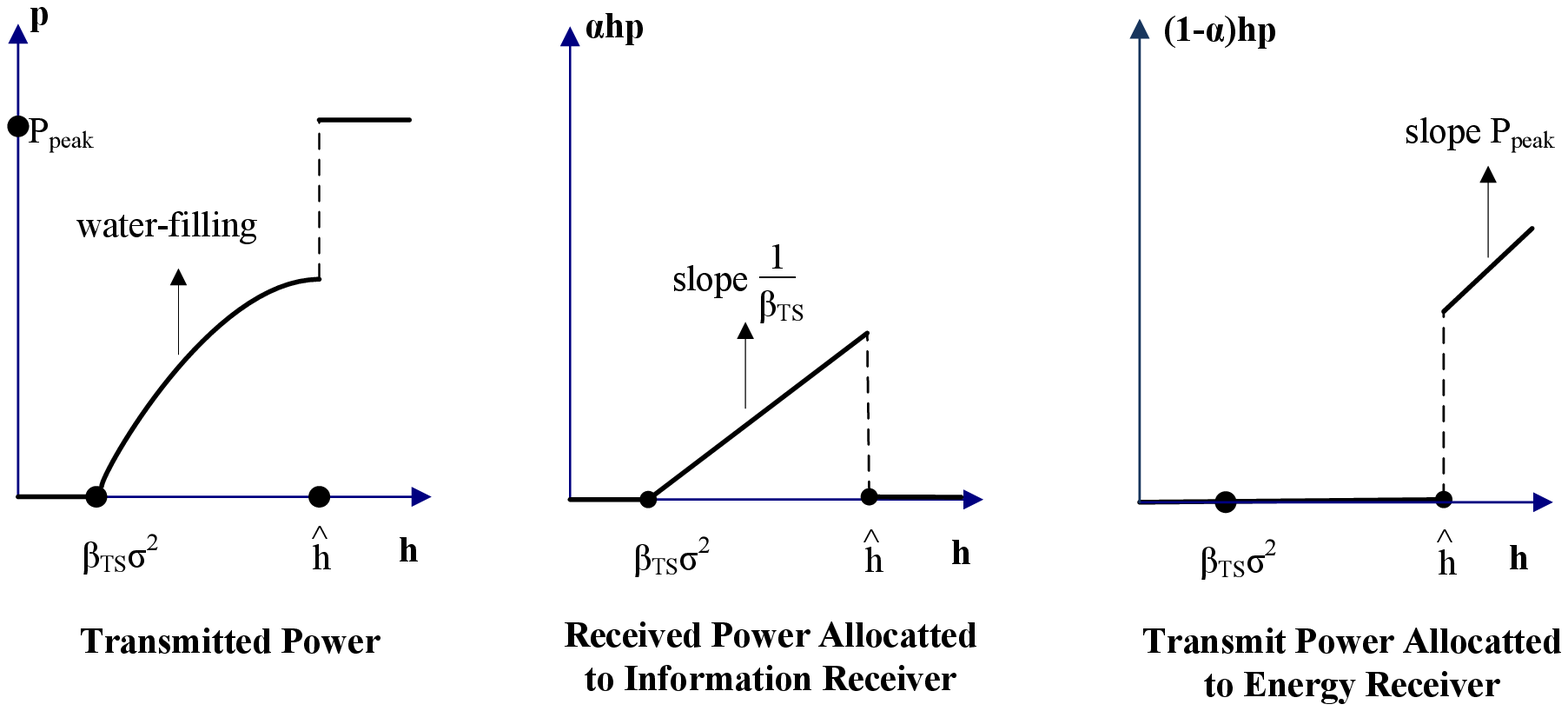}}}
\end{center}
\caption{Power splitting versus time switching: a comparison of the transmit power allocation and received power allocation over different fading states for the case with CSIT.}\label{fig5}
\end{figure}

\subsection{Performance Upper Bound}\label{sec:The Upper Bound of the R-E Trade-offs}
In this subsection, we derive a R-E region upper bound for DPS (as well as other practical receiver designs) by considering an ideal receiver that can simultaneously decode information and harvest energy from the same received signal without any information/energy loss. This is equivalent to setting $\alpha(\nu)=1$ $\forall \nu$ in (\ref{eqn:achievable rate}) and $\alpha(\nu)=0$ $\forall \nu$ in (\ref{eqn:harvested energy}) at the same time. In this case, the information rate and harvested energy at each fading state $\nu$ can be respectively expressed as
\begin{align}
& r(\nu)=\log\left(1+\frac{h(\nu)p(\nu)}{\sigma^2}\right), \label{eqn:upper rate}\\
& Q(\nu)=h(\nu)p(\nu). \label{eqn:upper harvested energy}
\end{align}

For the case without CSIT, there is no trade-off between information and energy transfer from the above since $p(\nu)=P$, $\forall \nu$. As a result, as shown in Fig. \ref{fig3}, the R-E region upper bound for the case without CSIT is simply a box. On the other hand, in the case with CSIT, a trade-off between $r(\nu)$ and $Q(\nu)$ given in (\ref{eqn:upper rate}) and (\ref{eqn:upper harvested energy}) due to the power allocation policy $p(\nu)$, which has been similarly studied in \cite{Sahai10} for the frequency-selective AWGN channel with simultaneous information and power transfer. By solving Problem (P2) for all feasible $\bar{Q}$'s, with $r(\nu)$ and $Q(\nu)$ replaced by (\ref{eqn:upper rate}) and (\ref{eqn:upper harvested energy}), respectively, the R-E region upper bound in the case of CSIT can be obtained. Let $\lambda^\ast$ and $\beta^\ast$ denote the optimal dual solutions associated with the harvested energy constraint $\bar{Q}$ and average power constraint $P_{{\rm avg}}$, respectively. By following the similar proof of Proposition \ref{proposition4}, we can obtain the optimal power allocation for achieving the R-E region upper bound in the case with CSIT in the following proposition.

\begin{proposition}\label{proposition5}
For Problem (P2) with $r(\nu)$ and $Q(\nu)$ replaced by (\ref{eqn:upper rate}) and (\ref{eqn:upper harvested energy}), respectively, and the constraint $0\leq \alpha(\nu)\leq 1$ being removed, the optimal power allocation is given by
\begin{align}
p(\nu)=\left\{\begin{array}{ll} P_{{\rm peak}}, & {\rm if} \ h(\nu)\geq \frac{\beta^\ast}{\lambda^\ast}, \\ \left[\frac{1}{\beta^\ast-\lambda^\ast h(\nu)}-\frac{\sigma^2}{h(\nu)}\right]_0^{P_{{\rm peak}}}, & {\rm otherwise}, \end{array}\right.
\end{align}where $[x]_a^b=\max(\min(x,b),a)$.
\end{proposition}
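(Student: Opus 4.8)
The plan is to exploit the fact that, once the power-splitting variable $\alpha(\nu)$ is eliminated through the idealized substitutions (\ref{eqn:upper rate})--(\ref{eqn:upper harvested energy}), Problem (P2) becomes a genuinely \emph{convex} program in the transmit powers $\{p(\nu)\}$ alone: the objective $E_\nu[\log(1+h(\nu)p(\nu)/\sigma^2)]$ is concave, the harvested-energy requirement $E_\nu[h(\nu)p(\nu)]\geq\bar{Q}$ is linear, and all constraints in $\mathcal{P}$ are linear. Strong duality therefore holds by Slater's condition whenever $\bar{Q}$ is strictly feasible, so it suffices to characterize the primal maximizer via the KKT conditions for fixed optimal dual variables, paralleling the proof of Proposition~\ref{proposition4} but without the difficulties caused by the non-convex product $\alpha hp$.

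First I would attach nonnegative dual variables $\lambda$ and $\beta$ to the average harvested-energy and average transmit-power constraints, respectively, while keeping the peak constraint $p(\nu)\leq P_{\rm peak}$ as an explicit box. As in Proposition~\ref{proposition4}, the Lagrangian separates across fading states, so for a fixed state (dropping $\nu$) the task reduces to
\begin{align}
\max_{0\leq p\leq P_{\rm peak}}\ L(p),\qquad L(p)=\log\!\left(1+\frac{hp}{\sigma^2}\right)+\lambda hp-\beta p.
\end{align}

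Next I would confirm concavity from $L''(p)=-h^2/(\sigma^2+hp)^2<0$, so that the first-order condition alone determines the maximizer on the box. From
\begin{align}
L'(p)=\frac{h}{\sigma^2+hp}+\lambda h-\beta,
\end{align}
I would distinguish two cases by the sign of $\beta-\lambda h$. If $h\geq\beta^\ast/\lambda^\ast$, then $\lambda^\ast h-\beta^\ast\geq 0$ and, since $h/(\sigma^2+hp)>0$, the derivative is strictly positive for every $p\geq 0$; the maximizer is thus pinned at the upper endpoint $p=P_{\rm peak}$, which is the first branch. If $h<\beta^\ast/\lambda^\ast$, solving $L'(p)=0$ gives the interior stationary point $p=1/(\beta^\ast-\lambda^\ast h)-\sigma^2/h$, and projecting it onto $[0,P_{\rm peak}]$ with the operator $[\cdot]_0^{P_{\rm peak}}$ yields the second branch.

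I do not anticipate a serious obstacle, since convexity removes the duality-gap concern that required Lemma~\ref{lemma1} in the general non-convex setting. The only points needing care are routine bookkeeping: in the regime $h<\beta^\ast/\lambda^\ast$ the denominator $\beta^\ast-\lambda^\ast h$ is automatically positive, so the interior expression is well defined before projection; and one should verify that the clipping $[\cdot]_0^{P_{\rm peak}}$ faithfully captures the remaining sub-cases, namely that the unconstrained optimum falls below zero (so the transmitter stays silent) precisely when $L'(0)\leq 0$, and is capped at $P_{\rm peak}$ otherwise. Both facts follow immediately from the monotonicity of $L'(p)$ and the explicit value of $L'(0)=h/\sigma^2+\lambda h-\beta$.
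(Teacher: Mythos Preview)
Your proposal is correct and follows essentially the same Lagrangian-decomposition approach the paper intends: the paper does not give a separate proof for this proposition but simply states that it is obtained ``by following the similar proof of Proposition~\ref{proposition4},'' which amounts to exactly the per-fading-state maximization of $L(p)=\log(1+hp/\sigma^2)+\lambda hp-\beta p$ that you carry out. Your observation that the absence of $\alpha(\nu)$ renders the problem convex---so strong duality follows directly from Slater rather than via the time-sharing argument of Lemma~\ref{lemma1}---is a clean simplification but does not change the route.
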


\section{Extension and Application: Dynamic Power Splitting for the SIMO Fading Channel}\label{sec:Implementation of Power Splitting by Multiple Receive Antennas}

%


In this section, we extend the result for DPS to the SIMO fading channel, i.e., when the receiver is equipped with multiple antennas, and furthermore study a low-complexity implementation of power splitting, namely antenna switching \cite{Rui11}.

\subsection{Optimal Power Splitting}\label{sec:Optimal Power Splitting in SIMO System}

\begin{figure}
\begin{center}
\scalebox{0.45}{\includegraphics*{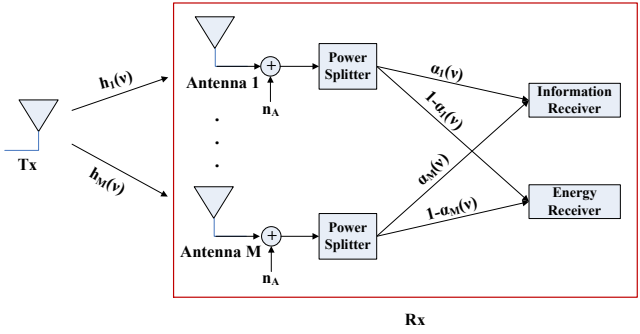}}
\end{center}
\caption{DPS for the SIMO system.}\label{fig9}
\end{figure}

First, we study the optimal DPS scheme for the SIMO system, as shown in Fig. \ref{fig9}. Assuming that the receiver is equipped with $M>1$ antennas, then at any fading state $\nu$, the complex channel and the channel power gain from Tx to the $m$th antenna of Rx are denoted by $g_m(\nu)$ and $h_m(\nu)=|g_m(\nu)|^2$, $1\leq m \leq M$, respectively. Without loss of generality, similar to the SISO case, at fading state $\nu$, each receiving antenna $m$ can split $0\leq \alpha_m(\nu)\leq 1$ portion of the received signal power to the information receiver, and the remaining $1-\alpha_m(\nu)$ portion of power to the energy receiver.

For the information receiver, it is assumed that the maximal ratio combining (MRC) is applied over the signals split from the $M$ receiving antennas. Therefore, at fading state $\nu$, the achievable rate can be expressed as\begin{align}
r(\nu)=\log\left(1+\sum\limits_{m=1}^M\frac{\alpha_m(\nu) h_m(\nu)p(\nu)}{\sigma^2}\right). \label{eqn:new rate}
\end{align}Moreover, the total harvested energy from the signals split from the $M$ receiving antennas at the energy receiver can be expressed as\begin{align}Q(\nu)=\sum\limits_{m=1}^M(1-\alpha_m(\nu))h_m(\nu)p(\nu). \label{eqn:new harvested energy}
\end{align}Then, with $r(\nu)$ and $Q(\nu)$ given by (\ref{eqn:new rate}) and (\ref{eqn:new harvested energy}), we can define the achievable R-E regions for the SIMO system in both the cases without and with CSIT as $\mathcal{C}_{{\rm R-E}}^{{\rm w/o \ CSIT \ (SIMO)}}$ and $\mathcal{C}_{{\rm R-E}}^{{\rm CSIT \ (SIMO)}}$, respectively, similarly to (\ref{eqn:rate-energy region without CSI}) and (\ref{eqn:rate-energy region with CSI}) in the SISO case, and characterize their boundaries by solving problems similarly to (P1) and (P2).

\subsubsection{The Case Without CSIT}

\ \ \ \ \

First, we study the optimal DPS for the case without CSIT in the SIMO fading channel to obtain $\mathcal{C}_{{\rm R-E}}^{{\rm w/o \ CSIT \ (SIMO)}}$. Given $p(\nu)=P$, $\forall \nu$, similar to solving (P1) in Section \ref{the case without CSIT}, by introducing the Lagrange dual variable $\lambda$ associated with the energy constraint $\bar{Q}$, the optimization problem for the SIMO system can be decoupled into parallel subproblems each for one fading state, which is expressed as (by ignoring the fading index $\nu$)\begin{align}\label{eqn:subproblem5}\max_{\{0\leq \alpha_m \leq 1\}}  ~~~  L_{\nu}^{{\rm
w/o \ CSIT \ (SIMO)}}(\{\alpha_m\}),\end{align}where \begin{align}\label{eqn:5} & L_{\nu}^{{\rm
w/o \ CSIT \ (SIMO)}}(\{\alpha_m\}) \nonumber \\=& r+\lambda Q \nonumber \\ =& \log\left(1+\frac{\sum\limits_{m=1}^M\alpha_m h_mP}{\sigma^2}\right)+\sum\limits_{m=1}^M\lambda(1-\alpha_m)h_mP.\end{align}

\begin{lemma}\label{lem1}
Given any fixed $\lambda$, Problem (\ref{eqn:subproblem5}) is equivalent to the following problem:
\begin{align}\mathop{\mathtt{Maximize}}_{\alpha}
& ~~~ \log\left(1+\frac{\alpha\sum\limits_{m=1}^M h_mP}{\sigma^2}\right)+(1-\alpha)\sum\limits_{m=1}^M\lambda h_mP \nonumber \\
\mathtt {Subject \ to} & ~~~ 0\leq \alpha \leq 1. \label{eqn:subproblem6}
\end{align}
\end{lemma}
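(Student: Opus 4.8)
The plan is to observe that the objective in (\ref{eqn:5}) depends on the power-splitting vector $\{\alpha_m\}$ only through a single scalar aggregate, namely the total signal power routed to the information receiver. First I would introduce the shorthand $x\triangleq\sum_{m=1}^M\alpha_m h_m P$ for this aggregate and $S\triangleq\sum_{m=1}^M h_m P$ for the total received power across all antennas. The rate term in (\ref{eqn:5}) is manifestly $\log(1+x/\sigma^2)$, while the energy term rewrites as $\sum_{m=1}^M\lambda(1-\alpha_m)h_m P=\lambda S-\lambda x$. Hence the objective $L_{\nu}^{{\rm w/o \ CSIT \ (SIMO)}}(\{\alpha_m\})$ equals $\log(1+x/\sigma^2)+\lambda(S-x)$, a function of $x$ alone once $S$ is fixed.

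Next I would characterize the set of values of $x$ that are reachable. Under the box constraint $0\leq\alpha_m\leq 1$ for each $m$, the quantity $x=\sum_{m=1}^M\alpha_m h_m P$ ranges continuously over the interval $[0,S]$: its minimum $0$ is attained at $\alpha_m=0$ for all $m$, its maximum $S$ at $\alpha_m=1$ for all $m$, and every intermediate value is attained because $x$ is an affine, hence continuous, function of $\{\alpha_m\}$ over a connected feasible box. Crucially, this same interval $[0,S]$ is swept out by the \emph{uniform} choice $\alpha_m=\alpha$, $\forall m$: substituting gives $x=\alpha S$, which traces all of $[0,S]$ as $\alpha$ runs over $[0,1]$.

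The equivalence then follows immediately. Both the general problem (\ref{eqn:subproblem5}) and the reduced single-variable problem (\ref{eqn:subproblem6}) amount to maximizing the identical scalar function $f(x)=\log(1+x/\sigma^2)+\lambda(S-x)$ over the identical feasible set $x\in[0,S]$; for (\ref{eqn:subproblem6}) this reduction comes from the substitution $\alpha_m=\alpha$, which turns its objective into $\log(1+\alpha S/\sigma^2)+(1-\alpha)\lambda S=f(\alpha S)$. Consequently the two problems share the same optimal value, and any maximizer $x^\star$ of $f$ can be realized by the uniform splitting $\alpha_m=x^\star/S$, so uniform power splitting is without loss of optimality.

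There is little genuine difficulty here; the one point that deserves explicit care is verifying that the non-uniform feasible region and the uniform feasible region induce \emph{exactly} the same range $[0,S]$ of the aggregate $x$, so that restricting to uniform splitting cannot shrink the attainable objective values. I would state this step rather than leave it implicit, since it is precisely what upgrades the claim from ``uniform splitting is a valid candidate'' to ``uniform splitting is optimal,'' thereby justifying the subsequent treatment of the $M$ antennas as a single virtual antenna with sum-power $S$.
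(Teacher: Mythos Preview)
Your proof is correct and follows essentially the same approach as the paper: both arguments recognize that the objective depends on $\{\alpha_m\}$ only through the aggregate $x=\sum_m\alpha_m h_m P$, and that the range of $x$ attainable over the full box $[0,1]^M$ coincides with the range attainable via uniform splitting $\alpha_m=\alpha$. The paper phrases this as a two-sided inequality between optimal values (mapping any $\{\alpha_m\}$ to $\alpha=x/S$ and conversely), while you frame it as both problems reducing to the same scalar maximization $\max_{x\in[0,S]} f(x)$; these are the same idea.
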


\begin{proof}
Given any $\alpha_m\in [0,1]$, $1\leq m \leq M$, it follows that $\alpha\triangleq \frac{\sum\limits_{m=1}^M\alpha_m h_m P}{\sum\limits_{m=1}^Mh_m P}$ lies in $[0,1]$ and achieves the same objective value of Problem (\ref{eqn:subproblem6}) as that of Problem (\ref{eqn:subproblem5}). Thus, the optimal value of Problem (\ref{eqn:subproblem6}) must be no smaller than that of Problem (\ref{eqn:subproblem5}). On the other hand, given any $0\leq \alpha \leq 1$, there exists at least one solution for $\alpha_m$'s such that $\sum\limits_{m=1}^M\alpha_m h_m P=\alpha\sum\limits_{m=1}^Mh_m P$ with $0\leq \alpha_m \leq 1$, $\forall m$. Thus, the optimal value of Problem (\ref{eqn:subproblem5}) must be no smaller than that of Problem (\ref{eqn:subproblem6}). Therefore, Problems (\ref{eqn:subproblem5}) and (\ref{eqn:subproblem6}) have the same optimal value and thus are equivalent. Lemma \ref{lem1} is thus proved.
\end{proof}

Lemma \ref{lem1} suggests that a ``uniform power splitting (UPS)'' scheme by setting $\alpha_m=\alpha$, $\forall m$, is in fact optimal to achieve the boundary of $\mathcal{C}_{{\rm R-E}}^{{\rm w/o \ CSIT \ (SIMO)}}$ in the SIMO fading channel without CSIT. More interestingly, Lemma \ref{lem1} establishes the equivalence between the optimal DPS policies for the SIMO and SISO systems, given as follows. By comparing Problem (\ref{eqn:subproblem3}) in the SISO case and Problem (\ref{eqn:subproblem6}) in the SIMO case, it is observed that if $h$ is replaced by $\sum\limits_{m=1}^Mh_m$, then Problem (\ref{eqn:subproblem3}) is the same as Problem (\ref{eqn:subproblem6}). Therefore, in the SIMO case, we can treat all the receiving antennas as one ``virtual'' antenna with an equivalent channel sum-power gain from Tx as $h=\sum\limits_{m=1}^Mh_m$; thereby, the SIMO system in Fig. \ref{fig9} becomes equivalent to a SISO system that has been studied in Section \ref{the case without CSIT}. Hence, by replacing $h(\nu)$ by $\sum\limits_{m=1}^Mh_m(\nu)$ and letting $\alpha_m(\nu)=\alpha(\nu)$, $\forall m,\nu$, the optimal UPS solution for the SIMO fading channel can similarly be obtained by Proposition \ref{proposition3} in the SISO case, for which the details are omitted for brevity.


\subsubsection{The Case With CSIT}

\ \ \ \ \

Next, we consider the joint DPS at Rx and power control at Tx for the SIMO case with CSIT to characterize the boundary of $\mathcal{C}_{{\rm R-E}}^{{\rm CSIT \ (SIMO)}}$. Similar to solving Problem (P2) in Section \ref{the case with CSIT}, by introducing the Lagrange dual variables $\lambda$ and $\beta$ associated with the energy constraint $\bar{Q}$ and average power constraint $P_{{\rm avg}}$, respectively, the optimization problem for the SIMO fading channel can be decoupled into parallel subproblems each for one fading state, which is expressed as (by ignoring the fading index $\nu$)\begin{align}\label{eqn:subproblem7}\max_{0\leq p\leq P_{{\rm peak}},\{0\leq \alpha_m \leq 1\}}  ~~~ L_{\nu}^{{\rm
with \ CSIT \ (SIMO)}}(\{\alpha_m\},p),\end{align}where \begin{align}\label{eqn:7} & L_{\nu}^{{\rm
with \ CSIT \ (SIMO)}}(\{\alpha_m\},p) \nonumber \\ =&r+\lambda Q \nonumber \\ =& \log\left(1+\frac{\sum\limits_{m=1}^M\alpha_m h_mp}{\sigma^2}\right)+\sum\limits_{m=1}^M\lambda(1-\alpha_m)h_mp-\beta p.\end{align}

Similar to Lemma \ref{lem1}, the following lemma establishes the optimality of UPS in the SIMO case with CSIT.

\begin{lemma}\label{lem2}
Given any fixed $\lambda$ and $\beta$, Problem (\ref{eqn:subproblem7}) is equivalent to the following problem:
\small{\begin{align}\mathop{\mathtt{Maximize}}_{p,\alpha}
& ~~~ \log\left(1+\frac{\alpha\sum\limits_{m=1}^M h_m p}{\sigma^2}\right)+(1-\alpha)\sum\limits_{m=1}^M\lambda h_m p-\beta p \nonumber \\
\mathtt {Subject \ to} & ~~~ 0\leq \alpha \leq 1, \nonumber \\ & ~~~ 0\leq p \leq P_{{\rm peak}}. \label{eqn:subproblem8}
\end{align}}
\end{lemma}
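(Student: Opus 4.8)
The plan is to establish Lemma \ref{lem2} by essentially repeating the argument of Lemma \ref{lem1}, which is a two-sided feasibility-and-value equivalence: for each direction I exhibit a feasible point of one problem that attains an objective value at least as large as the optimum of the other, so that the two optimal values coincide and the problems are equivalent. The key observation is that both the objective (\ref{eqn:7}) and the objective of (\ref{eqn:subproblem8}) depend on the per-antenna splitting ratios $\{\alpha_m\}$ and on $\alpha$ only through the aggregate information-split power $\sum_{m=1}^M \alpha_m h_m p$ (equivalently $\alpha \sum_{m=1}^M h_m p$). The crucial point is that for any fixed $p$, the extra Tx-power term $-\beta p$ is identical in both formulations and therefore plays no role in the splitting comparison; it can be carried along as a constant with respect to the $\alpha$-variables.

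First I would fix an arbitrary feasible $p \in [0,P_{\rm peak}]$ and treat $p$ as common to both problems, reducing the task for that fixed $p$ to exactly the splitting equivalence already proved in Lemma \ref{lem1} (with $P$ replaced by $p$). Concretely, for the forward direction, given any feasible $\{\alpha_m\} \in [0,1]^M$ for (\ref{eqn:subproblem7}), I set $\alpha \triangleq \frac{\sum_{m=1}^M \alpha_m h_m p}{\sum_{m=1}^M h_m p}$, which lies in $[0,1]$ (taking $\alpha$ arbitrary, say $\alpha=0$, in the degenerate case $\sum_m h_m p=0$) and reproduces the same value of $\sum_m \alpha_m h_m p$, hence the same objective in (\ref{eqn:subproblem8}); this shows the optimum of (\ref{eqn:subproblem8}) is no smaller. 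For the reverse direction, given any feasible $(\alpha,p)$ for (\ref{eqn:subproblem8}), I note that $\alpha \sum_{m=1}^M h_m p$ is an achievable value of $\sum_{m=1}^M \alpha_m h_m p$ over $\{\alpha_m\}\in[0,1]^M$ (indeed $\alpha_m=\alpha$ $\forall m$ works, which is precisely UPS), so there exists a feasible $\{\alpha_m\}$ for (\ref{eqn:subproblem7}) matching the objective value. The two inequalities together force equality of the optimal values.

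The main thing to be careful about, rather than a genuine obstacle, is the joint treatment of $p$: because $p$ is now a decision variable shared by both problems and appears inside the aggregate $\alpha\sum_m h_m p$, I must make sure the splitting equivalence holds \emph{for every fixed $p$ simultaneously}, so that after maximizing over the shared $p$ the two optimal values still agree. This is clean because the maps $\{\alpha_m\}\mapsto \sum_m \alpha_m h_m p$ and $\alpha \mapsto \alpha\sum_m h_m p$ have the identical range $[0,\sum_m h_m p]$ for each fixed $p$, so the pointwise-in-$p$ objective equivalence lifts immediately to the joint maximization; the $-\beta p$ term is untouched by this reparametrization. Having established the equivalence, I would close by remarking that the optimal choice $\alpha_m(\nu)=\alpha(\nu)$ $\forall m$ again realizes UPS and reduces the SIMO problem to the SISO Problem (\ref{eqn:subproblem4}) under the substitution $h \to \sum_{m=1}^M h_m$, so that Proposition \ref{proposition4} yields the optimal joint power control and splitting solution for the SIMO case with CSIT.
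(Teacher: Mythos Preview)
Your proposal is correct and follows essentially the same approach the paper indicates: the paper omits the proof of Lemma~\ref{lem2} and states it is similar to that of Lemma~\ref{lem1}, and your argument is precisely that---the two-direction feasibility-and-value matching via $\alpha=\frac{\sum_m \alpha_m h_m p}{\sum_m h_m p}$ and $\alpha_m=\alpha$ (UPS), applied for each fixed $p$ with the $-\beta p$ term carried along unchanged. Your additional remark that the pointwise-in-$p$ equivalence lifts to the joint maximization over $p\in[0,P_{\rm peak}]$ is the only extra care needed beyond Lemma~\ref{lem1}, and you handle it correctly.
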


The proof of Lemma \ref{lem2} is similar to that of Lemma \ref{lem1}, and is thus omitted for brevity. Lemma \ref{lem2} implies that the equivalence between the SIMO and SISO systems also holds in the case with CSIT, by treating all the receiving antennas in the SIMO system as one ``virtual'' antenna in the SISO system with the equivalent channel power gain given by $h=\sum\limits_{m=1}^Mh_m$. As for the case without CSIT, the optimal transmitter power allocation $p(\nu)$ and receiver UPS $\alpha_m(\nu)=\alpha(\nu)$, $\forall m,\nu$, for the SIMO fading channel with CSIT can similarly be obtained from Proposition \ref{proposition4} in the SISO case by replacing $h(\nu)$ by $\sum\limits_{m=1}^Mh_m(\nu)$.

\subsection{Antenna Switching}\label{sec:Antenna Switching to Approach Optimal Power Splitting}

Note that the optimal UPS for the SIMO system requires multiple power splitters each equipped with one receiving antenna to adjust the power splitting ratio at each fading state. Practically, this could be very costly to implement. Therefore, in this subsection we consider a low-complexity implementation for power splitting in the SIMO system with multiple receiving antennas, namely antenna switching \cite{Rui11}. As shown in Fig. \ref{fig12}, at each fading state, instead of splitting the power at each receiving antenna, the antenna switching scheme simply connects one subset of the receiving antennas (denoted by $\Phi_{{\rm ID}}(\nu)$) to information receiver, with the remaining subset of antennas (denoted by $\Phi_{{\rm EH}}(\nu)$) to energy harvester, i.e.,\begin{align}\label{eqn:indicator
function}\alpha_m(\nu)=\left\{\begin{array}{ll}1, & {\rm if} \ m\in \Phi_{{\rm ID}},
\\ 0, & {\rm if} \ m\in \Phi_{{\rm EH}}, \end{array} \right.  1\leq m \leq M.\end{align}It is worth noting that antenna switching can be shown equivalent to UPS with $\alpha_m(\nu)=\frac{\sum\limits_{m\in \Phi_{{\rm ID}}}h_m(\nu)p(\nu)}{\sum\limits_{m=1}^Mh_m(\nu)p(\nu)}$ for $\forall m, \nu$. However, since antenna switching only requires the time switcher at each receiving antenna instead of the more costly power splitter in UPS, it is practically more favorable. In the following, we study the optimal antenna switching policy for the SIMO fading channel without or with CSIT.

\begin{figure}
\begin{center}
\scalebox{0.45}{\includegraphics*{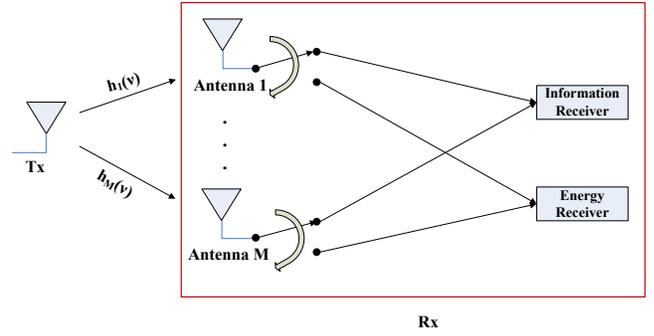}}
\end{center}
\caption{Antenna switching for the SIMO system.}\label{fig12}
\end{figure}

\subsubsection{The Case Without CSIT}

\ \ \ \ \

In this case, the optimal antenna switching rule can be obtained by solving Problem (\ref{eqn:subproblem5}) with $\alpha_m(\nu)$'s given in (\ref{eqn:indicator function}). However, to find the optimal antenna partitions, i.e., $\Phi_{{\rm ID}}^\ast(\nu)$ and $\Phi_{{\rm EH}}^\ast(\nu)$, at any fading state $\nu$, we need to search over $2^M$ possible antenna combinations to maximize (\ref{eqn:5}), for which the complexity goes up exponentially as $M$ increases.

\subsubsection{The Case With CSIT}

\ \ \ \ \

In this case, the optimal transmitter power control $p^\ast(\nu)$ and receiver antenna switching $\Phi_{{\rm ID}}^\ast(\nu)$ and $\Phi_{{\rm EH}}^\ast(\nu)$ at each fading state $\nu$ can be obtained by solving Problem (\ref{eqn:subproblem7}) with $\alpha_m$'s given by (\ref{eqn:indicator function}). First, given any policy of $\Phi_{{\rm ID}}(\nu)$ and $\Phi_{{\rm EH}}(\nu)$, Problem (\ref{eqn:subproblem7}) reduces to the following problem:
\begin{align}\mathop{\mathtt{Maximize}}_{p}
& ~~~ \log\left(1+\frac{\sum\limits_{m \in \Phi_{{\rm ID}}} h_m p}{\sigma^2}\right)+\sum\limits_{m\in \Phi_{{\rm EH}}}\lambda h_m p-\beta p \nonumber \\ \mathtt {Subject \ to} & ~~~ 0\leq p \leq P_{{\rm peak}}. \label{eqn:subproblem9}
\end{align}It can be shown that the optimal solution to Problem (\ref{eqn:subproblem9}) is given by
\begin{small}
\begin{align}\label{eqn:antenna switching power}
p^\ast=\left\{\begin{array}{ll}P_{{\rm peak}}, & {\rm if} \ \sum\limits_{m\in \Phi_{{\rm EH}}}h_m\geq \frac{\beta}{\lambda}, \\ \left[\frac{1}{\beta-\lambda \sum\limits_{m\in \Phi_{{\rm EH}}}h_m}-\frac{\sigma^2}{\sum\limits_{m\in \Phi_{{\rm ID}}}h_m}\right]_0^{P_{{\rm peak}}} . & {\rm otherwise} \end{array} \right.
\end{align}\end{small}Therefore, given any antenna partitions $\Phi_{{\rm ID}}(\nu)$ and $\Phi_{{\rm EH}}(\nu)$, the value of (\ref{eqn:7}) can be obtained by (\ref{eqn:antenna switching power}). Then, the optimal $\Phi_{{\rm ID}}^\ast(\nu)$ and $\Phi_{{\rm EH}}^\ast(\nu)$ can be found by searching over all $2^M$ possible antenna combinations to maximize the resulting value of (\ref{eqn:7}).

\subsection{Low-Complexity Antenna Switching Algorithm}\label{sec:A Heuristic Antenna Switching Scheme with Lower Complexity}

Although antenna switching reduces the hardware complexity as compared to power splitting for the SIMO system, its optimal policy by the exhaustive search as shown in the previous subsection is of exponentially increasing complexity with the number of receiving antennas $M$. In this subsection, we propose a low-complexity algorithm for antenna switching which only has a polynomial complexity in the order of $\mathcal{O}(M^2)$ instead of $\mathcal{O}(2^M)$ by the exhaustive search. Instead of solving Problems (\ref{eqn:subproblem5}) and (\ref{eqn:subproblem7}) directly with $\alpha_m(\nu)$'s given in (\ref{eqn:indicator function}), the proposed algorithm first solves the optimal UPS policy (see Section \ref{sec:Optimal Power Splitting in SIMO System}) by treating the SIMO system as an equivalent SISO system with one virtual antenna and then efficiently finds a pair of $\Phi_{{\rm ID}}(\nu)$ and $\Phi_{{\rm EH}}(\nu)$ to approximate the obtained UPS solution as close as possible.

\subsubsection{The Case Without CSIT}

\ \ \ \ \

From Proposition \ref{proposition3}, it is known that the optimal UPS policy for the equivalent SISO system (with channel power gain $h(\nu)=\sum\limits_{m=1}^Mh_m(\nu)$) in the case of SIMO system without CSIT allocates $\frac{1}{\lambda^\ast}-\sigma^2$ amount of power to the information receiver if the total received power $\sum\limits_{m=1}^Mh_m(\nu)P$ is larger than $\frac{1}{\lambda^\ast}-\sigma^2$; otherwise, all the received power is allocated to the information receiver (c.f. Fig. \ref{fig4}(a)). Therefore, to approximate the optimal UPS policy in the case without CSIT, at each fading state we should find a solution for antenna switching such that $\sum\limits_{m\in \Phi_{{\rm ID}}(\nu)}h_m(\nu)P$ is as close to $\frac{1}{\lambda^\ast}-\sigma^2$ as possible. On the other hand, to satisfy the average harvested energy constraint $\bar{Q}$, $\sum\limits_{m\in \Phi_{{\rm ID}}(\nu)}h_m(\nu)P$ should be no large than $\frac{1}{\lambda^\ast}-\sigma^2$. Hence, by defining the set $\mathcal{T}=\{h_1(\nu)P,\cdots,h_M(\nu)P\}$, our proposed antenna switching algorithm searches for a subset of $\mathcal{T}$ that has the sum of elements closest to, but no larger than $\frac{1}{\lambda^\ast}-\sigma^2$. This leads to the following problem at each fading state of $\nu$.
\begin{align*}\mathrm{(P3)}:~\mathop{\mathtt{Minimize}}_{\Upsilon=\{\alpha_1(\nu),\cdots,\alpha_M(\nu)\}}
& ~~~ \frac{1}{\lambda^\ast}-\sigma^2-\sum\limits_{m=1}^M\alpha_m(\nu)h_m(\nu)P \\
\mathtt {Subject \ to} & ~~~ \sum\limits_{m=1}^M\alpha_m(\nu)h_m(\nu)P\leq \frac{1}{\lambda^\ast}-\sigma^2, \\ & ~~~ \alpha_m(\nu) \in \{0,1\}, ~~~ \forall m.
\end{align*}

\begin{table}[htp]
\begin{center}
\caption{\textbf{Algorithm to solve Problem (P3)}} \vspace{0.2cm}
 \hrule
\vspace{0.2cm}
\begin{itemize}
\item[1.] Check whether $\sum\limits_{m=1}^Mh_m(\nu)P\leq \frac{1}{\lambda^\ast}-\sigma^2$. If yes, set $\Upsilon=\{1,\cdots,1\}$ and exit the
algorithm; otherwise, do the following steps.
\item[2.] Given $\epsilon>0$ and $\eta>0$ to control the algorithm accuracy, and set $\mathcal{S}_0=\{0\}$, $\Upsilon_{0,1}=\{0,\cdots,0\}$.
\item[3.] For $i=1:M$
\begin{itemize}
\item[a.] for $j=1:|\mathcal{S}_{i-1}|$
\begin{itemize}
\item[i.] Set $\bar{\mathcal{S}}_i^{(j)}=\mathcal{S}_{i-1}^{(j)}$, $\bar{\mathcal{S}}_i^{(|\mathcal{S}_{i-1}|+j)}=\mathcal{S}_{i-1}^{(j)}+h_i(\nu)P$;
\item[ii.] Set $\bar{\Upsilon}_{i,j}=\Upsilon_{i-1,j}$, $\bar{\Upsilon}_{i,|\mathcal{S}_{i-1}|+j}=\Upsilon_{i-1,j}$, $\bar{\Upsilon}_{i,|\mathcal{S}_{i-1}|+j}^{(i)}=1$;
\end{itemize}
\item[b.] Sort the elements of $\bar{\mathcal{S}}_i$ in a non-decreasing order; adjust $\bar{\Upsilon}_{i,j}$'s accordingly such that each $\bar{\Upsilon}_{i,j}$ indicates the antenna partitions to achieve $\bar{\mathcal{S}}_i^{(j)}$;
\item[c.] Set $n=1$, $\mathcal{S}_i^{(n)}=\{0\}$ and $\Upsilon_{i,n}=\{0,\cdots,0\}$; do for $j=2:|\bar{\mathcal{S}}_i|$
\begin{itemize}
\item[i.] if $\left(1+\frac{\epsilon}{2M}\right) \mathcal{S}_i^{(n)}<\bar{\mathcal{S}}_i^{(j)}\leq \frac{1}{\lambda^\ast}-\sigma^2$, then set $n=n+1$ and $\mathcal{S}_i^{(n)}=\bar{\mathcal{S}}_i^{(j)}$, $\Upsilon_{i,n}=\bar{\Upsilon}_{i,j}$.
\end{itemize}
\item[d.] if $\frac{\frac{1}{\lambda^\ast}-\sigma^2}{1+\eta}\leq \mathcal{S}_i^{(|\mathcal{S}_i|)} \leq \frac{1}{\lambda^\ast}-\sigma^2$, set $\Upsilon=\Upsilon_{i,|\mathcal{S}_i|}$ and exit the
algorithm;

\end{itemize}

\item[4.] Set $\Upsilon=\Upsilon_{M,|\mathcal{S}_M|}$.
\end{itemize}
\vspace{0.2cm} \hrule \label{table1}
\end{center}
\end{table}

For any set $\Omega$, let $|\Omega|$ denote the cardinality of $\Omega$, and $\Omega^{(n)}$ denote the $n$th element in $\Omega$. In Table \ref{table1}, we provide an algorithm to efficiently solve Problem (P3). Note that in Step 1 of the algorithm, all the received power is allocated to the information receiver, i.e., $\Upsilon=\{1,\cdots,1\}$, if $\sum\limits_{m=1}^Mh_m(\nu)P\leq \frac{1}{\lambda^\ast}-\sigma^2$ at a particular fading state. Otherwise, at the $i$th iteration in Step 3a, $\bar{\mathcal{S}}_i$ consists of all the possible values of the total power allocated to the information receiver if only the first $i$ antennas perform antenna switching while the remaining $M-i$ antennas allocate all the received power to the energy receiver, i.e., $\alpha_m(\nu)=0$, $\forall m>i$, and $\bar{\Upsilon}_{i,j}=\{\alpha_1(\nu),\cdots,\alpha_M(\nu)\}$ denotes the antenna switching strategy that achieves the value $\bar{\mathcal{S}}_i^{(j)}$. Steps 3b and 3c aim to eliminate the elements that are close to each other in the set $\mathcal{S}_i$. Finally, the algorithm terminates if the stopping criterion in Step 3d is satisfied.

Note that in this algorithm, if $\epsilon$ is set to zero, then it becomes the exhaustive search method, which has the same complexity order as that of the optimal antenna switching given in Section \ref{sec:Antenna Switching to Approach Optimal Power Splitting}, i.e., $\mathcal{O}(2^M)$. However, the following proposition shows that with a small positive number $\epsilon>0$, the proposed algorithm in Table \ref{table1} has a guaranteed performance as well as a polynomial-time complexity.

\begin{proposition}\label{proposition5}

\ \ \ \ \ \ \

\begin{itemize}
\item[1.] For any $\epsilon>0$, the solution obtained by the algorithm in Table \ref{table1}, $\Upsilon=\{\alpha_1(\nu),\cdots,\alpha_M(\nu)\}$, satisfies
\begin{align}\label{eqn:gap}
\frac{\sum\limits_{m=1}^M\alpha_m^\ast(\nu)h_m(\nu)P}{1+\epsilon} & \leq \sum\limits_{m=1}^M\alpha_m(\nu)h_m(\nu)P \nonumber \\ & \leq \sum\limits_{m=1}^M\alpha_m^\ast(\nu)h_m(\nu)P,
\end{align}where $\{\alpha_1^\ast(\nu),\cdots,\alpha_M^\ast(\nu)\}$ denotes the optimal solution to Problem (P3).
\item[2.] The algorithm in Table \ref{table1} has the worst-case complexity in the order of $\mathcal{O}(M^2)$.
\end{itemize}
\end{proposition}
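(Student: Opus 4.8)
The plan is to recognize Problem (P3) as a subset-sum optimization and to analyze the algorithm in Table \ref{table1} as a fully polynomial-time approximation scheme for that problem. The quantity controlling the trimming in Step 3c is $\delta\triangleq\epsilon/(2M)$: after $h_i(\nu)P$ is appended to every retained sum, consecutive survivors of $\mathcal{S}_i$ are forced to differ by a factor exceeding $1+\delta$, and every survivor is capped at $T\triangleq\frac{1}{\lambda^\ast}-\sigma^2$. These two facts, geometric spacing and boundedness, drive both parts of the proof, so I would isolate them first: in sorted order, any value $\bar{\mathcal{S}}_i^{(j)}\le T$ that Step 3c discards satisfies $\bar{\mathcal{S}}_i^{(j)}/(1+\delta)\le z\le\bar{\mathcal{S}}_i^{(j)}$ for the most recent retained $z$, which is exactly the per-stage error bound I will use.

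For Part 1, I would prove the invariant by induction on $i$: for every subset $A\subseteq\{1,\dots,i\}$ whose partial sum $y_A=\sum_{m\in A}h_m(\nu)P$ obeys $y_A\le T$, the list $\mathcal{S}_i$ contains some $z$ with $y_A/(1+\delta)^i\le z\le y_A$. The base case $i=0$ is immediate from $\mathcal{S}_0=\{0\}$. For the step, split on whether $i\in A$. If $i\notin A$, the hypothesis gives a survivor $z'\in\mathcal{S}_{i-1}$ with $y_A/(1+\delta)^{i-1}\le z'\le y_A$; this $z'$ enters $\bar{\mathcal{S}}_i$, and trimming replaces it by a retained representative $z$ with $z'/(1+\delta)\le z\le z'$, costing one more $(1+\delta)$ factor. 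If $i\in A$, apply the hypothesis to $A\setminus\{i\}$ to obtain $z'$, observe $z'+h_i(\nu)P\in\bar{\mathcal{S}}_i$ approximates $y_A$ from below, and trim identically. In both branches $z\le y_A\le T$, so the feasibility cap never discards the representative. Applying the invariant at $i=M$ to the optimal set $A^\ast$ of (P3) yields $z$ with $y_{A^\ast}/(1+\delta)^M\le z\le y_{A^\ast}$; since the algorithm returns the largest retained sum, its output lies between $z$ and the optimum $y_{A^\ast}$, and the bound $(1+\delta)^M=(1+\epsilon/(2M))^M\le e^{\epsilon/2}\le 1+\epsilon$ (valid in the relevant small-$\epsilon$ regime) collapses the invariant to (\ref{eqn:gap}). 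The Step 1 shortcut returns the exact optimum when $\sum_m h_m(\nu)P\le T$, and the Step 3d early exit only returns a feasible sum no worse than this worst-case path, so the $(1+\epsilon)$ guarantee holds throughout.

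For Part 2, I would bound the list length from the geometric spacing. Every positive survivor of $\mathcal{S}_i$ lies in $[\min_m h_m(\nu)P,\,T]$ and consecutive ones differ by a factor exceeding $1+\delta$, so $|\mathcal{S}_i|\le 2+\log_{1+\delta}\!\big(T/\min_m h_m(\nu)P\big)=O\!\big(\tfrac{1}{\delta}\big)=O(M/\epsilon)$, treating the bounded channel dynamic range and the fixed accuracy $\epsilon$ as constants, whence $\max_i|\mathcal{S}_i|=O(M)$. Because the two halves of $\bar{\mathcal{S}}_i$ are each sorted (Step 3a appends a constant to an already sorted list), the sort in Step 3b can be realized as a linear merge, so Steps 3a--3c cost $O(|\mathcal{S}_{i-1}|)$ per iteration rather than $O(|\mathcal{S}_{i-1}|\log|\mathcal{S}_{i-1}|)$. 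Summing over the $M$ iterations of Step 3 gives total work $O\!\big(M\cdot\max_i|\mathcal{S}_i|\big)=O(M^2)$, and the early-exit tests in Steps 1 and 3d only shorten this.

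The main obstacle is the inductive step of Part 1: I must keep the two inequalities $z\le y_A$ and $z\ge y_A/(1+\delta)^i$ synchronized through both the ``use antenna $i$'' and ``skip antenna $i$'' branches, and verify that the representative selected by Step 3c (the retained element immediately at or below each near-tie) simultaneously preserves the underestimate $z\le y_A$ — which guarantees feasibility against the cap $\le T$ and forbids the final output from exceeding the optimum — and inflates the lower bound by at most a single $(1+\delta)$ factor per stage. This bookkeeping, rather than any hard inequality, is where the argument is delicate; once it is pinned down, the telescoping to $(1+\delta)^M$ and the spacing-based list bound follow routinely.
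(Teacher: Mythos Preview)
Your proposal is correct and follows essentially the same approach as the paper: both treat the algorithm as the standard trimming-based FPTAS for subset-sum, bound the per-iteration multiplicative error by $1+\epsilon/(2M)$ and telescope to $(1+\epsilon/(2M))^M\le 1+\epsilon$, and then bound $|\mathcal{S}_i|$ via the geometric spacing of survivors to get $O(M)$ list sizes and hence $O(M^2)$ total work. Your write-up is in fact more careful than the paper's terse appendix---you spell out the inductive invariant that the paper only asserts, and you explicitly note that Step~3b can be done by a linear merge rather than a comparison sort, which is needed to actually land at $O(M^2)$ rather than $O(M^2\log M)$.
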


\begin{proof}
Please refer to Appendix \ref{appendix7}.
\end{proof}

Proposition \ref{proposition5} indicates that (1) the accuracy of the algorithm in Table \ref{table1} can be made arbitrarily high by setting an appropriate value of $\epsilon>0$; and (2) this algorithm has a complexity in the order of $\mathcal{O}(M^2)$, which is significantly lower than $\mathcal{O}(2^M)$ by the exhaustive search.

\subsubsection{The Case With CSIT}

\ \ \ \ \

According to Proposition \ref{proposition4}, in the case of SIMO system with CSIT, the optimal UPS policy for the equivalent SISO system should allocate $\frac{1}{\lambda^\ast}-\sigma^2$ amount of power to the information receiver if the total received power $\sum\limits_{m=1}^Mh_m(\nu)p^\ast(\nu)$ with the optimal transmit power $p^\ast(\nu)$ is larger than $\frac{1}{\lambda^\ast}-\sigma^2$. However, if at any fading state the total received power is less than $\frac{1}{\lambda^\ast}-\sigma^2$, then it should be all allocated to the information receiver (c.f. Fig. \ref{fig5} (a) and (b)). Thus, in the case with CSIT, we can first obtain the optimal transmitter power allocation $p^\ast(\nu)$ for the equivalent SISO system based on Proposition \ref{proposition4}, and then find a pair of $\Phi_{{\rm ID}}(\nu)$ and $\Phi_{{\rm EH}}(\nu)$ for antenna switching such that $\sum\limits_{m\in \Phi_{{\rm ID}}(\nu)}h_m(\nu)p^\ast(\nu)$ is closest to, but no larger than $\frac{1}{\lambda^\ast}-\sigma^2$, similar to the case without CSIT. Therefore, the algorithm proposed in Table \ref{table1} for Problem (P3) (with $P$ replaced by $p^\ast(\nu)$ ) can be applied to the case with CSIT as well to find a low-complexity antenna switching solution.

\subsection{Numerical Results}

\begin{figure}
\begin{center}
\scalebox{0.5}{\includegraphics*{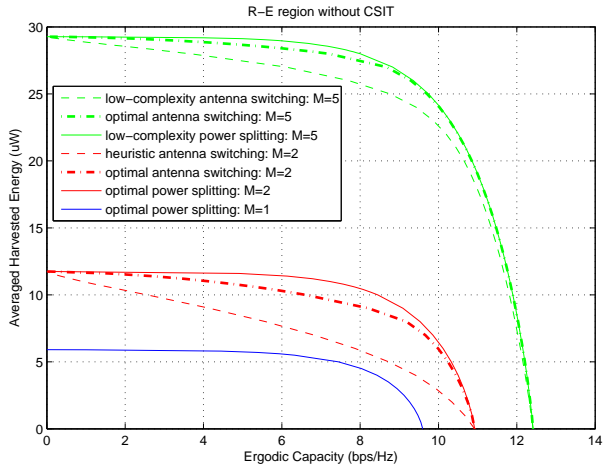}}
\end{center}
\caption{R-E regions of power splitting versus antenna switching for the SIMO system without CSIT.}\label{fig8}
\end{figure}

\begin{figure}
\begin{center}
\scalebox{0.5}{\includegraphics*{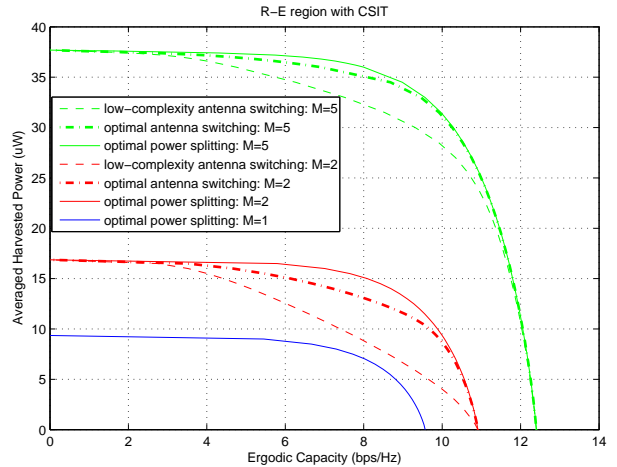}}
\end{center}
\caption{R-E regions of power splitting versus antenna switching for the SIMO system with CSIT.}\label{fig10}
\end{figure}

In this subsection, we provide numerical results to compare the performance of the following three schemes for the SIMO system: the optimal DPS in Section \ref{sec:Optimal Power Splitting in SIMO System}, the optimal antenna switching by exhaustive search in Section \ref{sec:Antenna Switching to Approach Optimal Power Splitting}, and the low-complexity antenna switching in Section \ref{sec:A Heuristic Antenna Switching Scheme with Lower Complexity}. For the proposed algorithm in Table \ref{table1}, both $\epsilon$ and $\eta$ are set as $0.1$. All the parameters for the SIMO setup, e.g., $P_{{\rm peak}}$ and $P_{{\rm avg}}$, are the same as those in the SISO case for Fig. \ref{fig3} in Section \ref{Ergodic Capacity and Harvested Energy Trade-off in Fading Channels}. Furthermore, let $\mv{g}(\nu)=[g_1(\nu),\cdots,g_M(\nu)]^T$ denote the complex channel vector at any fading state $\nu$; then similar to the SISO case, the channel can be modeled as $\mv{g}(\nu)=\sqrt{\frac{K}{K+1}}\hat{\mv{g}}+\sqrt{\frac{1}{K+1}}\tilde{\mv{g}}(\nu)$, where $\hat{\mv{g}}$ is the LOS deterministic component, $\tilde{\mv{g}}(\nu)=[\tilde{g}_1(\nu),\cdots,\tilde{g}_M(\nu)]^T$ denotes the Rayleigh fading component with each element $\tilde{g}_m(\nu)\sim \mathcal{CN}(0,-40{\rm dB})$, and $K$ is the Rician factor set to be $3$. Note that for the LOS component, we use the far-field uniform linear antenna array model \cite{Luo07} with $\hat{\mv{g}}=10^{-4}[1,e^{j\tau},\cdots,e^{j(M-1)\tau}]^T$, where $\tau$ denotes the difference of the phases between two successive receive antennas. Here we set $\tau=-\frac{\pi}{2}$.

Figs. \ref{fig8} and \ref{fig10} compare the achievable R-E regions by the three considered schemes in the SIMO system without versus with CSIT. It is observed that as compared to the case of SISO system with $M=1$, a significantly enlarged R-E region is achieved by using two receiving antennas ($M=2$), even with the low-complexity antenna switching algorithm. It is also observed that as $M$ increases, the performance of the optimal antenna switching by the exhaustive search approaches to that of the optimal UPS. Since antenna switching is a generalization of time switching for the SISO system to the SIMO system, this observation is in sharp contrast to that in Fig. \ref{fig3} where there exists a significant R-E performance loss by time switching as compared to power splitting for the SISO system. More interestingly, as $M$ increases, even the low-complexity antenna switching algorithm is observed to perform very closely to the optimal UPS, which suggests that antenna switching for the SIMO system with a sufficiently large $M$ can be an appealing low-complexity implementation of power splitting in practice.

\section{Conclusion}\label{Concluding Remarks}
This paper studies simultaneous wireless information and power transfer (SWIPT) via the approach of dynamic power splitting (DPS). Under a point-to-point flat-fading SISO
channel setup, we show the optimal power splitting rule at the receiver based on the CSI
to optimize the rate-energy performance
trade-off. When the CSI is also known at the transmitter, the jointly
optimized transmitter
power control and receiver power splitting is derived. The performance of the proposed DPS in the SISO fading channel is compared with that of the existing time switching as well as a performance upper bound obtained by ignoring the practical circuit limitation. Furthermore, we extend the DPS scheme to the SIMO system with multiple receiving antennas and show that a uniform power splitting (UPS) scheme is optimal. We also investigate the practical antenna switching scheme and propose a low-complexity algorithm for it, which can be efficiently implemented to achieve the R-E performance more closely to the optimal UPS as the number of receiving antennas increases.
%
%


\begin{appendix}

\subsection{Proof of Lemma \ref{lemma1}}\label{appendix1}

We consider an infinitesimal interval $\{h(\nu)|\hat{h}\leq h(\nu) \leq \hat{h}+\Delta h\}$, where $\Delta \rightarrow 0$. Since this interval is infinitesimal, we can assume that the value of $h(\nu)$ is constant over this interval, i.e., $h(\nu)=\hat{h}$. Moreover, $f_\nu(h)$ is also a constant denoted by $f_\nu(\hat{h})$ since it is assumed to be a continuous function. As a result, given the constraint pair $(\bar{Q}^a,P_{{\rm avg}}^a)$, the optimal solution can be assumed to be constant within this interval, i.e., $\alpha^a(\nu)=\hat{\alpha}^a$ and $p^a(\nu)=\hat{p}^a$, because the same Karush-Kuhu-Tucker (KKT) conditions hold in the interval. Similarly, given the constraint pair $(\bar{Q}^b,P_{{\rm avg}}^b)$, it follows that $\alpha^a(\nu)=\hat{\alpha}^b$ and $p^a(\nu)=\hat{p}^b$ over this interval. Next, we construct a new solution for Problem (P2) as follows. We divide the interval into two sub-intervals, which have the solution $\alpha^c(\nu)=\hat{\alpha}^a$ and $p^c(\nu)=\hat{p}^a$ corresponding to $\theta$ portion of the interval, and $\alpha^c(\nu)=\hat{\alpha}^b$ and $p^c(\nu)=\hat{p}^b$ for the other $1-\theta$ portion, respectively. It then follows that the average harvested energy in this interval with the new solution is\begin{align*}\Delta Q^c=&(1-\hat{\alpha}^a)\hat{h}\hat{p}^af_\nu(\hat{h})\times \theta \Delta h  \\ & +(1-\hat{\alpha}^b)\hat{h}\hat{p}^bf_\nu(\hat{h})\times (1-\theta) \Delta h .\end{align*}As a result, the average harvested energy over all the fading states can be expressed as\begin{align*}E_\nu[Q^c(\nu)]&=\int\Delta Q^cd\nu=\theta Q^a+(1-\theta)Q^b  \\ &\geq\theta \bar{Q}^a+(1-\theta)\bar{Q}^b,\end{align*}where $Q^\gamma=E_\nu[(1-\alpha^\gamma(\nu))h(\nu)p^\gamma(\nu)]$ with $\gamma\in \{a,b\}$ denotes the average harvested energy by the solution $\{p^\gamma(\nu),\alpha^\gamma(\nu)\}$. Similarly, it can be shown that with the new solution, $E_\nu[r^c(\nu)]\geq \theta E_\nu[r^a(\nu)]+(1-\theta) E_\nu[r^b(\nu)]$ and $E_\nu[p^c(\nu)]\leq \theta P_{{\rm avg}}^a+(1-\theta)P_{{\rm avg}}^b$ can be satisfied. Lemma \ref{lemma1} is thus proved.

\subsection{Proof of Proposition \ref{proposition3}}\label{appendix5}
The derivative of $L_{\nu}^{{\rm
w/o \ CSIT}}(\alpha)$ in (\ref{eqn:1}) with respect to $\alpha$ can be expressed as\begin{align}\frac{\partial L_{\nu}^{{\rm
w/o \ CSIT}}(\alpha)}{\partial \alpha}=\frac{hP}{\alpha hP+\sigma^2}-\lambda hP.\end{align}Since $0\leq \alpha \leq 1$, it follows that \begin{align}\frac{hP}{hP+\sigma^2}-\lambda hP \leq \frac{\partial L_{\nu}^{{\rm
w/o \ CSIT}}(\alpha)}{\partial \alpha} \leq \frac{hP}{\sigma^2}-\lambda hP.\end{align}

If $\frac{hP}{hP+\sigma^2}-\lambda hP\geq 0$, i.e., $h\leq\frac{1}{\lambda P}-\frac{\sigma^2}{P}$, then $\frac{\partial L_{\nu}^{{\rm
w/o \ CSIT}}(\alpha)}{\partial \alpha}\geq 0$ for all $0\leq \alpha\leq 1$. Thus the optimal solution to Problem (\ref{eqn:subproblem3}) is $\alpha^\ast=1$. Otherwise, if $h>\frac{1}{\lambda P}-\frac{\sigma^2}{P}$, the maximum of $L_{\nu}^{{\rm
w/o \ CSIT}}(\alpha)$ is achieved when $\frac{\partial L_{\nu}^{{\rm
w/o \ CSIT}}(\alpha)}{\partial \alpha}=0$, i.e., $\alpha^\ast=\frac{1}{\lambda hP}-\frac{\sigma^2}{hP}$. Proposition \ref{proposition3} is thus proved.

\subsection{Proof of Proposition \ref{proposition4}}\label{appendix6}

The derivative of $L_{\nu}^{{\rm
with \ CSIT}}(p,\alpha)$ given in (\ref{eqn:2}) with respect to $\alpha$ can be expressed as\begin{align}\frac{\partial L_{\nu}^{{\rm
with \ CSIT}}(p,\alpha)}{\partial \alpha}=\frac{hp}{\alpha hp+\sigma^2}-\lambda hp.\end{align}
For any given $p\in [0,P_{{\rm peak}}]$, since $0\leq \alpha \leq 1$, it follows that \begin{align}\frac{hp}{hp+\sigma^2}-\lambda hp \leq \frac{\partial L_{\nu}^{{\rm
with \ CSIT}}(p,\alpha)}{\partial \alpha} \leq \frac{hp}{\sigma^2}-\lambda hp.\end{align}It can be shown that if $\lambda \geq \frac{1}{\sigma^2}$, it follows that $\frac{\partial L_{\nu}^{{\rm
with \ CSIT}}(p,\alpha)}{\partial \alpha}\leq 0$, $\forall p$. In this case, for all the fading states we have $\alpha^\ast=0$, which implies that Problem (\ref{eqn:subproblem4}) is not feasible. As a result, in the following we only consider the case of $\lambda<\frac{1}{\sigma^2}$.

Define $S_1$ and $S_2$ as follows:\begin{align}& S_1=\left\{p\bigg|\frac{hp}{hp+\sigma^2}-\lambda hp \geq 0, \ 0\leq p \leq P_{{\rm peak}}\right\}, \\ & S_2=\left\{p\bigg|\frac{hp}{hp+\sigma^2}-\lambda hp<0, \ 0\leq p \leq P_{{\rm peak}}\right\}.\end{align}To be specific, if $\frac{1}{\lambda h}-\frac{\sigma^2}{h}\leq P_{{\rm peak}}$, i.e., $h\geq \frac{1}{\lambda P_{{\rm peak}}}-\frac{\sigma^2}{P_{{\rm peak}}}$, it follows that\begin{align}& S_1=\left\{p\bigg|0\leq p \leq \frac{1}{\lambda h}-\frac{\sigma^2}{h}\right\}, \label{eqn:s1case1} \\ & S_2=\left\{p\bigg|\frac{1}{\lambda h}-\frac{\sigma^2}{h} < p \leq P_{{\rm peak}}\right\}. \label{eqn:s2case1}\end{align}Otherwise, we have
\begin{align}& S_1=\left\{p\bigg|0\leq p \leq P_{{\rm peak}}\right\}, \label{eqn:s1case2} \\ & S_2=\emptyset. \label{eqn:s2case2}\end{align}

It can be shown that if $p\in S_1$, then \begin{align}\frac{\partial L_{\nu}^{{\rm
with \ CSIT}}(p,\alpha)}{\partial \alpha} \geq \frac{hp}{hp+\sigma^2}-\lambda hp \geq 0, \ \forall \alpha.\end{align}In this case, $L_{\nu}^{{\rm
with \ CSIT}}(p,\alpha)$ is a monotonically increasing function of $\alpha$, and thus the optimal power splitting ratio is $\alpha^\ast=1$. If $p\in S_2$, then we have\begin{align}\frac{\partial L_{\nu}^{{\rm
with \ CSIT}}(p,\alpha)}{\partial \alpha}=0 \ \Rightarrow \ \alpha^\ast=\frac{1}{\lambda hp}-\frac{\sigma^2}{hp}.\end{align}

To summarize, we have
\begin{align}
& L_{\nu}^{{\rm with \ CSIT}}(p,\alpha^\ast) \nonumber \\ = &\left\{\begin{array}{ll}\log\left(1+\frac{hp}{\sigma^2}\right)-\beta p, & {\rm if} ~ p\in S_1, \\ \log\frac{1}{\lambda \sigma^2}+\lambda hp-\beta p+\lambda \sigma^2-1, & {\rm if} ~ p\in S_2.\end{array}\right.
\end{align}

To find the optimal power allocation $p^\ast$ given any channel power $h$, we need to compare the optimal values of the following two subproblems.\begin{align*}\mathrm{(P2.1)}:~\mathop{\mathtt{Maximize}}\limits_{p}
& ~~~ \log\left(1+\frac{hp}{\sigma^2}\right)-\beta p \\
\mathtt {Subject \ to} & ~~~ p\in S_1,
\end{align*}\begin{align*}\mathrm{(P2.2)}:~\mathop{\mathtt{Maximize}}\limits_{p}
& ~~~ \log\frac{1}{\lambda \sigma^2}+\lambda hp-\beta p+\lambda \sigma^2-1 \\
\mathtt {Subject \ to} & ~~~ p\in S_2.
\end{align*}Since the expressions of $S_1$ and $S_2$ depend on the relationship between $h$ and $\frac{1}{\lambda P_{{\rm peak}}}-\frac{\sigma^2}{P_{{\rm peak}}}$, in the following we solve Problems (P2.1) and (P2.2) in two different cases.

\begin{itemize}
\item[1)] {\bf Case I}: $h\geq \frac{1}{\lambda P_{{\rm peak}}}-\frac{\sigma^2}{P_{{\rm peak}}}$
\end{itemize}

In this case, $S_1$ and $S_2$ are expressed in (\ref{eqn:s1case1}) and (\ref{eqn:s2case1}), respectively. Then the optimal solution to Problem (P2.1) can be expressed as\begin{align}\label{eqn:case1}p=\left\{\begin{array}{ll}\frac{1}{\lambda h}-\frac{\sigma^2}{h}, & {\rm if} ~ h\geq \psi, \\ \left(\frac{1}{\beta}-\frac{\sigma^2}{h}\right)^+, & {\rm if} ~ \frac{1}{\lambda P_{{\rm peak}}}-\frac{\sigma^2}{P_{{\rm peak}}} \leq h < \psi,\end{array}\right.\end{align}where $\psi=\max\{\frac{\beta}{\lambda},\frac{1}{\lambda P_{{\rm peak}}}-\frac{\sigma^2}{P_{{\rm peak}}}\}$, and $(x)^+=\max \{0,x\}$. Furthermore, the optimal solution to Problem (P2.2) can be obtained as\begin{align}\label{eqn:case2}p=\left\{\begin{array}{ll}P_{{\rm peak}}, & {\rm if} ~ h\geq \psi, \\\frac{1}{\lambda h}-\frac{\sigma^2}{h}, & {\rm if} ~ \frac{1}{\lambda P_{{\rm peak}}}-\frac{\sigma^2}{P_{{\rm peak}}}\leq h < \psi.\end{array}\right.\end{align}

Since the expressions of (\ref{eqn:case1}) and (\ref{eqn:case2}) depend on the relationship between $\frac{\beta}{\lambda}$ and $\frac{1}{\lambda P_{{\rm peak}}}-\frac{\sigma^2}{P_{{\rm peak}}}$, in the following we further discuss two subcases.

\begin{itemize}
\item {\bf Subcase I-i}: $\frac{1}{\lambda P_{{\rm peak}}}-\frac{\sigma^2}{P_{{\rm peak}}}\leq \frac{\beta}{\lambda}$
\end{itemize}

In this subcase, $\psi=\frac{\beta}{\lambda}$. It can be observed from (\ref{eqn:case1}) and (\ref{eqn:case2}) that if $\psi=\frac{\beta}{\lambda}$, the optimal power solution to Problem (P2.1) is $\frac{1}{\lambda h}-\frac{\sigma^2}{h}$, $\frac{1}{\beta}-\frac{\sigma^2}{h}$ or $0$, while that to Problem (P2.2) is $P_{{\rm peak}}$ or $\frac{1}{\lambda h}-\frac{\sigma^2}{h}$, depending on the value of $h$. Therefore, three cases exist when $\frac{1}{\lambda P_{{\rm peak}}}-\frac{\sigma^2}{P_{{\rm peak}}}\leq \frac{\beta}{\lambda}$, discussed as follows.

The first case is $h\geq\frac{\beta}{\lambda}$, for which the difference between the optimal values of Problems (P2.1) and (P2.2) can be expressed as\begin{align}\label{eqn:subcase}d_1=&\left[\log\left(1+\frac{hp}{\sigma^2}\right) -\beta p\right]\bigg|_{p=\frac{1}{\lambda h}-\frac{\sigma^2}{h}} \nonumber \\ & -\left[\log \frac{1}{\lambda \sigma^2}+\lambda hp-\beta p+\lambda \sigma^2-1\right]\bigg|_{p=P_{{\rm peak}}} \nonumber \\ =&\left(\log \frac{1}{\lambda \sigma^2}-\frac{\beta}{\lambda h}+\frac{\beta \sigma^2}{h} \right) \nonumber \\ & -\left(\log \frac{1}{\lambda \sigma^2}+\lambda hP_{{\rm peak}}-\beta P_{{\rm peak}}+\lambda \sigma^2-1 \right) \nonumber \\ =& (\lambda h-\beta)\left(\frac{1}{\lambda h}-\frac{\sigma^2}{h}-P_{{\rm peak}}\right)<0.\end{align}Therefore, if $h\geq \frac{\beta}{\lambda}$, the optimal value of Problem (P2.2) is always larger than that of Problem (P2.1), and the optimal solution to Problem (\ref{eqn:subproblem4}) is $p^\ast=P_{{\rm peak}}$ and $\alpha^\ast=\frac{1}{\lambda hP_{{\rm peak}}}-\frac{\sigma^2}{hP_{{\rm peak}}}$.


The second case is $\max(\frac{1}{\lambda P_{{\rm peak}}}-\frac{\sigma^2}{P_{{\rm peak}}},\beta \sigma^2)\leq h<\frac{\beta}{\lambda}$, for which the difference between the optimal values of Problems (P2.1) and (P2.2) can be expressed as\begin{align}d_2=& \left[\log\left(1+\frac{hp}{\sigma^2}\right)-\beta p\right]\bigg|_{p=\frac{1}{\beta}-\frac{\sigma^2}{h}} \nonumber \\ & -\left[\log \frac{1}{\lambda \sigma^2}+\lambda hp-\beta p+\lambda \sigma^2-1\right]\bigg|_{p=\frac{1}{\lambda h}-\frac{\sigma^2}{h}} \nonumber \\ = & \left(\log\frac{h}{\beta \sigma^2}-1+\frac{\beta \sigma^2}{h}\right)-\left(\log\frac{1}{\lambda \sigma^2}-\frac{\beta}{\lambda h}+\frac{\beta \sigma^2}{h}\right) \nonumber \\ = & \log\frac{\lambda h}{\beta}+\frac{\beta}{\lambda h}-1.\end{align}It can be shown that the function $f(x)=\log x+\frac{1}{x}-1$ is a monotonically decreasing function in the interval $(0,1]$. Moreover, $\frac{\lambda h}{\beta}<1$. It then follows that \begin{align}d_2=\left[\log x+\frac{1}{x}-1\right]\bigg|_{x=\frac{\lambda h}{\beta}}>\left[\log x+\frac{1}{x}-1\right]\bigg|_{x=1}=0.\end{align}Thus, for this case the optimal value of Problem (P2.1) is always larger than that of Problem (P2.2), and the optimal solution to Problem (\ref{eqn:subproblem4}) is $p^\ast=\frac{1}{\beta}-\frac{\sigma^2}{h}$ and $\alpha^\ast=1$.

The third case is $\frac{1}{\lambda P_{{\rm peak}}}-\frac{\sigma^2}{P_{{\rm peak}}}\leq h< \beta \sigma^2$ (if $\frac{1}{\lambda P_{{\rm peak}}}-\frac{\sigma^2}{P_{{\rm peak}}}<\beta \sigma^2$), for which the difference between the optimal values of Problems (P2.1) and (P2.2) can be expressed as\begin{align}d_3=& \left[\log\left(1+\frac{hp}{\sigma^2}\right)-\beta p\right]\bigg|_{p=0} \nonumber \\ & -\left[\log \frac{1}{\lambda \sigma^2}+\lambda hp-\beta p+\lambda \sigma^2-1\right]\bigg|_{p=\frac{1}{\lambda h}-\frac{\sigma^2}{h}} \nonumber \\ = &-\left(\log\frac{1}{\lambda \sigma^2}-\frac{\beta}{\lambda h}+\frac{\beta \sigma^2}{h}\right) \nonumber \\ \overset{(a)}{\geq} & -\left(\log\frac{1}{\lambda \sigma^2}-\frac{1}{\lambda \sigma^2}+1\right) \nonumber \\ \overset{(b)}{\geq} & 0,\end{align}where $(a)$ is due to the fact that the function on the left hand side is a decreasing function in the interval of $h\in [\frac{1}{\lambda P_{{\rm peak}}}-\frac{\sigma^2}{P_{{\rm peak}}},\beta \sigma^2)$ if $\lambda<\frac{1}{\sigma^2}$, while $(b)$ is due to that $f(x)=-\log x+x-1$ is an increasing function if $x\geq 1$, and thus $f(x=\frac{1}{\lambda\sigma^2})\geq f(x=1)=0$. As a result, for this case the optimal value of Problem (P2.1) is larger than that of Problem (P2.2), and the optimal solution to Problem (\ref{eqn:subproblem4}) is thus $p^\ast=0$, $\alpha^\ast=1$.

\begin{itemize}
\item {\bf Subcase I-ii}: $\frac{1}{\lambda P_{{\rm peak}}}-\frac{\sigma^2}{P_{{\rm peak}}}> \frac{\beta}{\lambda}$
\end{itemize}

In this subcase, $\psi=\frac{1}{\lambda P_{{\rm peak}}}-\frac{\sigma^2}{P_{{\rm peak}}}$. It can be observed from (\ref{eqn:case1}) and (\ref{eqn:case2}) that if $\psi=\frac{1}{\lambda P_{{\rm peak}}}-\frac{\sigma^2}{P_{{\rm peak}}}$, the optimal power solution to Problem (P2.1) is $\frac{1}{\lambda h}-\frac{\sigma^2}{h}$, and that to Problem (P2.2) is $P_{{\rm peak}}$, and the difference between the optimal values of Problems (P2.1) and (P2.2) can be expressed as (\ref{eqn:subcase}). Thus, if $\frac{1}{\lambda P_{{\rm peak}}}-\frac{\sigma^2}{P_{{\rm peak}}}> \frac{\beta}{\lambda}$, the optimal solution to Problem (\ref{eqn:subproblem4}) is given by $p^\ast=P_{{\rm peak}}$ and $\alpha^\ast=\frac{1}{\lambda hP_{{\rm peak}}}-\frac{\sigma^2}{hP_{{\rm peak}}}$.

\begin{itemize}
\item[2)] {\bf Case II}: $h<\frac{1}{\lambda P_{{\rm peak}}}-\frac{\sigma^2}{P_{{\rm peak}}}$
\end{itemize}

In this case, $S_1$ and $S_2$ are expressed in (\ref{eqn:s1case2}) and (\ref{eqn:s2case2}), respectively. Since $S_2=\emptyset$, the optimal power splitting ratio to Problem (\ref{eqn:subproblem4}) is $\alpha^\ast=1$. Moreover, the optimal power allocation is only determined by Problem (P2.1), which can be expressed as $p^\ast=\left[\frac{1}{\beta}-\frac{\sigma^2}{h}\right]^{P_{{\rm peak}}}_0$, where $[x]_a^b=\max(\min(x,b),a)$.

By combining the above results, Proposition \ref{proposition4} is thus proved.
\subsection{Proof of Proposition \ref{proposition5}}\label{appendix7}

It is observed in Table \ref{table1} that at any iteration $i$, if any element in $\mathcal{S}_i$ does not exceed its previous element by a ratio of $\frac{\epsilon}{2M}$, it will not be included in the same set. As a result, each iteration introduces a multiplicative error factor of at most $\frac{\epsilon}{2M}$. In the worst case, it is then guaranteed that
\begin{align}
\frac{\sum\limits_{m=1}^M\alpha_m^\ast(\nu)h_m(\nu)P}{\sum\limits_{m=1}^M\alpha_m(\nu)h_m(\nu)P}\leq \left(1+\frac{\epsilon}{2M}\right)^M\leq 1+\epsilon.
\end{align}The first part of Proposition \ref{proposition5} is thus proved.

Next, at each iteration $i$, let $s_i^{{\rm min}}$ denote the smallest positive element in the set $\mathcal{S}_i$. Since each element in $\mathcal{S}_i$ is at least $\frac{\epsilon}{2M}$ times larger than its previous element, it follows that
\begin{align}
s_i^{{\rm min}}\left(1+\frac{\epsilon}{2M}\right)^{|\mathcal{S}_i|-2}\leq \frac{1}{\lambda^\ast}-\sigma^2.
\end{align}In other words, by defining $\tau_i=\frac{\frac{1}{\lambda^\ast}-\sigma^2}{s_i^{{\rm min}}}$, then at each iteration $i$, the size of $\mathcal{S}_i$ must satisfy
\begin{align}
|\mathcal{S}_i| & \leq 2+\log_{(1+\frac{\epsilon}{2M})}\tau_i \nonumber \\
& =2+\frac{\log \tau_i}{\log\left(1+\frac{\epsilon}{2M}\right)} \nonumber \\
& \overset{(a)}{\leq} 2+\frac{4M\log \tau_i}{\epsilon}, \label{eqn:size}
\end{align}where $(a)$ is due to $f(x)=\log(1+x)-\frac{x}{2}>0$ when $0<x\leq 1$, and $x=\frac{\epsilon}{2M}\ll 1$.

It is observed from (\ref{eqn:size}) that all the sets $\mathcal{S}_i$'s with $1\leq i \leq M$ have their sizes linearly growing with $M$; thus, since in Table \ref{table1} the algorithm has at most $M$ iterations, its complexity is in the order of $\mathcal{O}(M^2)$ for the worst case. The second part of Proposition \ref{proposition5} is thus proved.

\end{appendix}

%

\begin{thebibliography}{1}

\bibitem{Varshney08} L. R. Varshney, ``Transporting information and energy
simultaneously,'' in {\it Proc. IEEE Int. Symp. Inf. Theory (ISIT)},
pp. 1612-1616, July 2008.

\bibitem{Sahai10} P. Grover and A. Sahai, ``Shannon meets Tesla: wireless information and power
transfer,'' in {\it Proc. IEEE Int. Symp. Inf. Theory (ISIT)}, pp.
2363-2367, June 2010.

\bibitem{Rui11} R. Zhang and C. K. Ho, ``MIMO broadcasting for simultaneous wireless information and power transfer,'' in {\it Proc. IEEE Global
Commun. Conf. (GLOBECOM)}, Houston, Dec. 2011.

\bibitem{Rui2012} X. Zhou, R. Zhang, and C. Ho, ``Wireless information and power transfer: architecture design and rate-energy tradeoff,'' in {\it Proc. IEEE Global
Commun. Conf. (GLOBECOM)}, Dec. 2012.



\bibitem{Rui12} L. Liu, R. Zhang, and K. C. Chua, ``Wireless information transfer with opportunistic energy harvesting,'' {\it IEEE Trans. Wireless Commun.}, vol. 12, no. 1, pp. 288-300, Jan. 2013.



\bibitem{Shamai94} L. H. Ozarow, S. Shamai, and A. D. Wyner, ``Information theoretic considerations for cellular mobile radio,''
{\it IEEE Trans. Veh. Technol.}, vol. 43 no. 2, pp. 359-378, 1994.

\bibitem{Caire} G. Caire, G. Taricco, and E. Biglieri, ``Optimal power control over fading channels,'' {\it IEEE Trans. Inf. Theory},
vol. 45, no. 5, pp. 1468-1489, Jul. 1999.

\bibitem{Wu10} Y. Wu, Y. Liu, Q. Xue, S. Li, and C. Yu, ``Analytical design method of multiway dual-band planar power dividers with arbitrary power division,'' {\it IEEE Trans. Microwave Theory and Techniques}, vol. 58, no. 12, pp. 3832-3841, Dec. 2010.

\bibitem{datasheet} Product Datasheet, 11667A Power Splitter, Agilent Technologies.

\bibitem{Popovic08} T. Paing, J. Shin, R. Zane, and Z. Popovic, ``Resistor emulation approach to low-power RF energy harvesting,'' {\it IEEE Trans. Power Electronics,} vol. 23, no. 3, pp. 1494-1501, May 2008.

\bibitem{Shamai} E. Biglieri, J. Proakis, and S. Shamai (Shitz), ``Fading
channels: information-theoretic and communications aspects,'' {\it
IEEE Trans. Inf. Theory}, vol. 44, no. 6, pp. 2619-2692, Oct. 1998.



\bibitem{Shamai99} G. Caire and S. Shamai (Shitz), ``On the capacity of some channels with channel state information,'' {\it IEEE Trans.
Inf. Theory}, vol. 45, no. 6, pp. 2007-2019, Sep. 1999.

\bibitem{Goldsmith} A. Goldsmith and P. P. Varaiya, ``Capacity of fading channels with channel side information,'' {\it IEEE Trans.
Inf. Theory}, vol. 43, no. 6, pp. 1986-1992, Nov. 1997.


\bibitem{Khojastepour2004} M. Khojastepour and B. Aazhang, ``The capacity of average and peak power constrained fading channels with channel side information,'' in {\it Proc. IEEE Wireless Commun. Networking Conf.}, Mar. 2004, vol. 1, pp. 77-82.


\bibitem{Yu06} W. Yu and R. Lui, ``Dual methods for nonconvex spectrum optimization
of multicarrier systems,'' {\it IEEE Trans. Commun.}, vol. 54, no.
7, pp. 1310-1322, July 2006.

\bibitem{Boyd04} S. Boyd and L. Vandenberghe, {\it Convex
Optimization,} Cambidge Univ. Press, 2004.

\bibitem{Luo07} E. Karipidis, N. D. Sidiropoulos, and Z. Q. Luo, ``Far-field multicast beamforming for uniform linear antenna arrays,'' {\it IEEE Trans. Signal Process.,} vol. 55, no. 10, pp. 4916-4927, Oct. 2007.


\end{thebibliography}
\end{document}